\newtheorem{theorem}{Theorem}[section]
\newtheorem{corollary}[theorem]{Corollary}
\newtheorem{definition}[theorem]{Definition}
\newtheorem{lemma}[theorem]{Lemma}
\newenvironment{proof}[1][Proof]{\textbf{#1.} }{\ \rule{0.5em}{0.5em} \vspace{1ex}}
\newcommand{\argmax}{\operatorname*{argmax}}
\newcommand{\argmin}{\operatorname*{argmin}}
\newcommand{\diag}{\operatorname{diag}}
\newcommand{\prob}{\operatorname{Pr}}
\newcommand{\rank}{\operatorname{rank}}
\newcommand{\tr}{\operatorname{tr}}
\newcommand{\np}[1]{\textsf{#1}}
\newcommand{\templen}{s} 
\begin{document}

\title{Multireference Alignment using Semidefinite Programming}

\author{
Afonso S. Bandeira\thanks{Program in Applied and Computational Mathematics (PACM),
Princeton University, Princeton, NJ 08544, USA ({\tt ajsb@math.princeton.edu}).}
\and
Moses Charikar\thanks{Department of Computer Science,
Princeton University, Princeton, NJ 08544, USA ({\tt moses@cs.princeton.edu}).}
\and
Amit Singer\thanks{Department of Mathematics and PACM,
Princeton University, Princeton, NJ 08544, USA ({\tt amits@math.princeton.edu}).}
\and
Andy Zhu\thanks{Department of Mathematics,
Princeton University, Princeton, NJ 08544, USA ({\tt azhu@math.princeton.edu}).}
}

\maketitle

\begin{abstract}
The multireference alignment problem consists of estimating a signal from multiple noisy shifted observations.
Inspired by existing Unique-Games approximation algorithms, we provide a semidefinite program (SDP) based relaxation which approximates the maximum likelihood estimator (MLE) for the multireference alignment problem.
Although we show that the MLE problem is Unique-Games hard to approximate within any constant, 
we observe that 
our poly-time approximation algorithm for the MLE appears to perform quite well in typical instances, outperforming existing methods. In an attempt to explain this behavior we provide stability guarantees for our SDP under a random noise model on the observations. This case is more challenging to analyze than traditional semi-random instances of Unique-Games: the noise model is on vertices of a graph and translates into dependent noise on the edges. 
Interestingly, we show that if certain positivity constraints in the SDP are dropped, its solution becomes equivalent to performing phase correlation, a popular method used for pairwise alignment in imaging applications.
Finally, we show how symmetry reduction techniques from matrix representation theory can simplify the analysis and computation of the SDP, greatly decreasing its computational cost.
\end{abstract}

\begin{center}
\textbf{Keywords:}
Multireference Alignment, Semidefinite Relaxation, Phase Correlation, Unique-Games
\end{center}

\newpage

\section{Introduction}


The multireference alignment problem is the following: suppose there is a template vector $x \in \mathbb{R}^L$, from which are sampled $N$ cyclically-shifted copies with white additive Gaussian noise \begin{align} 
y_i = R_{l_i}x + \xi_i \in \mathbb{R}^{L},\ \xi_i = (\xi_{ik})_{k=1}^L \sim \mathcal{N}(0, \sigma^2 I_L) \text{ i.i.d}. \label{notation_model} 
\end{align} for $i = 1, 2, \ldots, N$. $R_{l}$ is the index cyclic shift operator $(x_1, \ldots, x_L) \mapsto (x_{1-\ell}, \ldots, x_{L-\ell})$ and $\sigma$ is a parameter controlling the signal-to-noise (SNR) ratio. Both the template $x$ and the shifts $l_1,\ldots,l_n$ are unknown. Furthermore, no model is presumed a-priori for their distribution. From this model we would like to deduce an accurate estimate for $x$ (up to a global cyclic shift). Such an estimate can be obtained by first estimating the shifts, and then averaging the unshifted observations. For this reason we will focus on the problem of estimating the shifts $l_1,\ldots,l_n$.

This problem has a vast list of applications. Alignment is directly used in structural biology \cite{diamond92_superposition} 
\cite{theobald12_superposition}; radar 
\cite{zwart03_HRR} \cite{duch05_neural}; crystalline simulations \cite{sonday2011_sim}; and image registration in a number of important contexts, such as in geology, medicine, and paleontology \cite{dryden98_shape} \cite{foroosh02_subpixelregistration}. Various methods to solve this problem are used in these communities (see Appendix \ref{appendix_misc}).


A na\"ive approach to estimate the shifts in (\ref{notation_model}) would be to fix one of the observations, say $y_i$, as a reference template and align every other $y_j$ with it by the shift $\rho_{ij}$ minimizing their distance
\begin{equation}\label{eq:rhoij}
\rho_{ij} = \argmin_{l\in\mathbb{Z}_L}\|y_j-R_ly_i\|_2.
\end{equation}

This solution works well at a high signal-to-noise ratio (SNR), but performs poorly at low SNR. 
A more democratic approach would be to calculate all of the pairwise relative shift estimates $\rho_{ij}$ before attempting to recover the shifts $\{l_i\}$. This can be done by solving the minimization problem
\begin{equation}\label{eq:error_shiftspace}
\min_{l_1,\ldots,l_N\in \mathbb{Z}_L} \sum_{i,j=1}^N \left| e^{2\pi \imath l_i/L} - e^{2\pi \imath \rho_{ij}/L}e^{2\pi \imath l_j/L} \right|^2.
\end{equation}
This problem is known as angular synchronization~\cite{ASinger_2011_angsync,Bandeira_Singer_Spielman_OdCheeger} and the solution can be approximated via a SDP-based relaxation. 

The issue with attempting to solve the alignment problem using either (\ref{eq:rhoij}) or (\ref{eq:error_shiftspace}) is that one is evaluating the performance of a given choice of $l_i,l_j$ for a pair $(i,j)$ by how far $l_i-l_j$ is from $\rho_{ij}$, but not taking into account the cost associated with other possible relative shifts. Relating $R_{-l_i}y_i$ and $R_{-l_i}y_j$, for example, would take into account information about all possible shifts instead of just the best one. The quasi maximum likelihood estimator (Section \ref{section_mle}) attempts to do exactly that by solving the minimization problem:
\begin{equation}\label{eq:error_signalspace}
\min_{l_1,\ldots,l_N\in \mathbb{Z}_L} \sum_{i,j=1}^N \left\| R_{-l_i}y_i - R_{-l_j}y_j \right\|^2.
\end{equation}

Finding the MLE (\ref{eq:error_signalspace}) is a non-trivial computational task because the parameter space is of exponential size, and the likelihood function is non-convex. While one can apply optimization methods such as gradient descent, simulated annealing, and expectation-maximization (EM), these are only guaranteed to find local minima of (\ref{eq:error_signalspace}), but not the global minimum.



In this paper we take a different approach and propose a semidefinite relaxation for the quasi maximum likelihood problem (\ref{eq:error_signalspace}). This particular SDP is inspired by an approximation algorithm designed to solve the Unique Games problem~\cite{Charikar_Makarychev_UG} (Section~\ref{section_sdp}). 


Convex relaxations of hard combinatorial problems have seen many successes in applied mathematics. They became particularly popular in the last decade with the introduction of Compressed Sensing, in the seminal work of Donoho, Candes, Tao, and others~\cite{Candes_CS2,Donoho_CS}. 
This idea has since been applied to a vast list of problems. Semidefinite programming (SDP) has served as a convex surrogate for problems arising in applications such as low-rank matrix completion~\cite{Candes_MatrixCompletion}, 
phase retrieval~\cite{Candes_Strohmer_Voroninski_phaselift}, Robust PCA~\cite{Tropp_RobustPCA}, 
multiple-input multiple-output (MIMO) channel detection~\cite{So_MIMO}, and many others. In many of these applications the same phenomenon is present: for typical instances, solving the convex problem is often equivalent to solving the original combinatorial problem \cite{Tropp_livingontheedge}.

Convex relaxations (and, in particular SDP based relaxations) also play a central role in the design of approximation algorithms in theoretical computer science. Almost two decades ago, Goemans and Williamson~\cite{MXGoemans_DPWilliamson_1995} proposed a SDP based approximation algorithm for the \np{MAX-CUT} problem with approximation ratio $\alpha^{\np{GW}} \approx 0.878$. That is, for any instance of the problem, the computed solution is guaranteed to provide performance (in this case, a cut) at least $\alpha^{\np{GW}}$ of the optimum. Many semidefinite relaxations have since been proposed as approximation algorithms for a long list of NP-hard problems~\cite{Design_ApAlg}. 

In order to better understand the theoretical limitations of approximation algorithms, substantial work has been done to establish limits on the approximation ratios achievable by poly-time algorithms for certain \np{NP}-hard problems 
(hardness of approximation).
The Unique Games conjecture by Khot~\cite{SKhot_2002} is central to many recent developments:
For $\delta, \varepsilon > 0$, it is impossible for a polynomial-time algorithm to distinguish between $\delta$-satisfiable and $(1 - \varepsilon)$-satisfiable \np{Unique-Games} instances.
A \textsf{Unique-Games} instance consists of a graph along with a permutations for each edge. The problem is to choose the best assignment of labels to each vertex such that as many of the edge permutations are satisfied. The validity of the UGC would imply the optimality of certain poly-time approximation ratios, in particular the Goemans-Williamson constant $\alpha_{\textsf{GW}}$ for the \textsf{MAX-CUT} problem \cite{MXGoemans_DPWilliamson_1995,KKMO}. 



The best known polynomial time approximation to the unique games problem~\cite{Charikar_Makarychev_UG} is based on an SDP relaxation of a formulation that uses indicator variables. It is quite different from SDPs normally used in applications (such as those described above). In particular, the variable matrix has size $NL\times NL$ and $\Omega\left(N^2L^2\right)$ constraints. We adapt this SDP to approximate the quasi maximum likelihood problem (\ref{eq:error_signalspace}).


As we show that it is Unique-Games hard to approximate (\ref{eq:error_signalspace}) within any constant (see Section~\ref{section_mle}) it is hopeless to aim for good guarantees for general instances. 
However worst case analysis is often too pessimistic and not indicative of performance 
observed in practice.
In fact, under the random noise model we have for the observations, numerical simulations suggest that the SDP relaxation performs remarkably well, seeming to outperform existing methods. In an attempt to explain this phenomenon we show that our SDP is stable at high SNR levels and that it is tight at extremely high SNR levels. By stability, we mean that with high probability the solution to the SDP does not deviate much from the true solution (see Theorem \ref{stability_theorem}). The stability for this SDP is particularly interesting as it is more challenging to analyze than random instances of Unique-Games~\cite{Arora_UGexpanders,Kolla_Tulsiani_UG}, since our noise model is on the vertices, which translates into dependent noise on the edges.
Still, these results fall short of properly explaining the remarkable performance that we see in simulations and more research is needed towards understanding the typical behavior of this SDP.

In order to simplify the SDP we also study a version with fewer constraints. 
Interestingly, this weaker SDP can be solved explicitly and is equivalent to the pairwise alignment method called phase correlation~\cite{Horner_PhaseCorrelation}. This method does not take into account information between all pairs of measurements, which suggests that the full complexity of the Unique-Games SDP~\cite{Charikar_Makarychev_UG} is needed to obtain a good approximation to (\ref{eq:error_signalspace}). 

The fact that a global shift does not affect the solution to (\ref{eq:error_signalspace}) creates symmetries in our SDP relaxation. In fact, we leverage such structure by using symmetry reduction techniques from matrix representation theory to simplify the analysis and computation of the SDP, greatly decreasing its computational cost. This is quite useful given the high computational cost of semidefinite programming.


{\bf Contributions:} Our main contribution is applying techniques from theoretical
Computer Science to a problem in applied Math. We introduce an Unique Games style
SDP relaxation for the alignment problem that is novel for the applied Math community.
From the theoretical Computer Science point of view, we introduce a new problem that
has a similar flavor to the Unique Games problem -- in fact we show that the worst case
version is at least as hard as Unique Games. We introduce a natural
average case version of this alignment problem - aligning several shifted copies of a signal
corrupted by independent Gaussian noise. Existing analyses of semi-random models of Unique Games
do not seem to apply to this problem. We show that for sufficiently high SNR, the SDP
solution is close to an integer solution -- this is a first step to establishing a signal recovery
result which we leave as an open problem. We believe that future investigations into this
problem will yield interesting insights into the Unique Games SDP and on dealing with
correlated noise in average case analysis.

\section{Quasi Maximum Likelihood Estimator} 
\label{section_mle}

The log likelihood function for the model (\ref{notation_model}) is \begin{align}
\mathcal{L}(x, l_1, \ldots, l_N \,|\, y_1, \ldots, y_N) = \frac{N}{2} \log (2\pi) - \frac{1}{2\sigma} \sum_{i \in [N]} \|R_{-l_i}y_i - x\|^2. \label{mle_log_likelihood}
\end{align} Maximizing $\mathcal{L}$ is equivalent to minimizing the sum of squared residuals $\sum \|R_{-l_i}y_i - x\|^2.$ Fixing the $l_i$'s, the minimal value of $\mathcal{L}$ occurs at the average $x = \frac 1N \sum_{i=1}^N R_{-l_i} y_i$. Making the tame assumption that $\|x\|^2$ is estimable (the norm is shift-invariant), maximizing (\ref{mle_log_likelihood}) is thus equivalent to maximizing the sum of the inner products $\left\langle R_{-l_i}y_i,\, R_{-l_j}y_j \right\rangle$ across all pairs $(i,j)$. 
%
Thus we consider the estimator \begin{align}
\widehat{\boldsymbol{\ell}} = \argmax_{\boldsymbol{\ell} \in \mathbb{Z}_L^N} \sum_{i,j \in [N]} \langle R_{-l_i} y_i, R_{-l_j} y_j \rangle
\label{mle_argmax}
\end{align} 
Unfortunately, the search space for this optimization problem has exponential size. Indeed, assuming no model for the vectors $\{y_i\}$, it is \np{NP}-hard to find the shifts which maximize (\ref{mle_argmax}), or even estimate it within a close constant factor.
\begin{theorem}
It is \np{NP}-hard (under randomized reductions) to find a set of labels approximating (\ref{mle_argmax}) within $16/17+\varepsilon$ of its optimum value. It is \np{UG}-hard (under randomized reductions) to approximate (\ref{mle_argmax}) within any constant factor.
\end{theorem}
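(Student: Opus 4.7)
The plan is to reduce two known-hard labelling problems to the MLE (\ref{mle_argmax}). First, I would rewrite the objective in a form that exposes its CSP structure. Using unitarity of the cyclic shift,
\[
\sum_{i,j \in [N]} \langle R_{-l_i}y_i,\,R_{-l_j}y_j \rangle \;=\; \sum_{i,j \in [N]} c_{ij}(l_i-l_j), \qquad c_{ij}(\ell) := \langle y_i,\, R_\ell y_j\rangle,
\]
so (\ref{mle_argmax}) is a \np{MAX-2-CSP} over the cyclic group $\mathbb{Z}_L$ with pairwise cost functions $c_{ij}\colon\mathbb{Z}_L\to\mathbb{R}$. Passing to the DFT gives $\widehat{c}_{ij}(k) = \hat y_i(k)\overline{\hat y_j(k)}$, so at each frequency $k$ the $N\times N$ matrix $(\widehat{c}_{ij}(k))_{i,j}$ is rank-$1$ positive semidefinite; this is essentially the only restriction on the family of cost functions realisable by MLE instances.

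For the $16/17+\varepsilon$ NP-hardness I would reduce from H\aa stad's hardness of \np{MAX-CUT}. Given a MAX-CUT instance on $G=(V,E)$, the task is to build vectors $y_i\in\mathbb{R}^L$ so that the cost functions $c_{ij}$ are concentrated on edge pairs and prefer $l_i\neq l_j$; concretely, one wants $c_{ij}(0)$ small and $c_{ij}(\ell_0)$ large for some fixed $\ell_0\neq 0$ whenever $(i,j)\in E$, and non-edge pairs to contribute a label-independent constant. Identifying two points of $\mathbb{Z}_L$ with the two sides of the cut would then make a cut of size $C$ correspond to an MLE objective of the form $\alpha C + \beta$ with universal constants $\alpha,\beta$, and the $16/17+\varepsilon$ gap transfers directly.

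For UG-hardness within any constant I would reduce from \np{MAX-2LIN mod $L$}, which by the Khot--Kindler--Mossel--O'Donnell reduction is UG-hard to approximate within any constant for large $L$. Given constraints $l_i-l_j\equiv \rho_{ij}\pmod{L}$ on a graph $G$, I would engineer the $y_i$ so that $c_{ij}$ has a sharp peak at $\rho_{ij}$ for $(i,j)\in E$ and is flat off $E$. A natural first attempt---setting $\hat y_i(k)=e^{2\pi\imath k\phi_i/L}$ to get $c_{ij}(\ell)$ peaked at $\phi_i-\phi_j$---only produces satisfiable instances, so to force genuine conflicts I would split the frequency set into disjoint blocks, letting each block encode only a subset of the constraints, so the labels must trade off inconsistent demands across blocks. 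Randomisation enters via the PCP hardness of the source CSPs and via a randomised choice of block assignment.

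The main obstacle is the rank-$1$ Fourier constraint $\widehat{c}_{ij}(k)=\hat y_i(k)\overline{\hat y_j(k)}$: an MLE instance cannot freely prescribe each pairwise cost, since the costs are controlled by $N$ vertex-vectors rather than $|E|$ edge-vectors, so no black-box reduction from a free \np{MAX-2-CSP} can succeed. Overcoming this is the main content of the gadget: the multi-block construction above must be calibrated so that each frequency contributes proportionally to the edges it encodes, otherwise a solver could cheaply discard an entire block. Once the gadget is in place, verifying that the completeness/soundness gap of the source CSP survives the embedding is a careful but routine accounting of how cross-edge and cross-frequency terms contribute to the MLE sum.
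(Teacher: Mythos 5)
Your proposal is essentially the paper's argument: the paper likewise reduces from $\Gamma$\np{-MAX-2LIN}$(q)$ (with \np{MAX-CUT} as the special case giving the $16/17+\varepsilon$ bound) and resolves the same vertex-versus-edge obstruction you identify by giving each edge its own disjoint block of coordinates, placing a shared random $\pm 1$ vector $z_{ij}$ in $y_i$ and $y_j$ at offsets encoding $c_{ij}$ and filling everything else with independent Rademacher entries so that Hoeffding bounds kill all spurious correlations. Your disjoint-frequency-block gadget with randomised block assignment is the Fourier dual of this spatial-domain construction, and the randomisation in the reduction enters for the same reason (the random gadget entries), not through the source CSP.
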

\begin{proof}
(outline)
We give a randomized reduction from the class of $\Gamma$\np{-MAX-2LIN}($q$) instances consisting of a set of $2$ variable linear equations of the form $x_i - x_j \equiv c_{ij} \pmod{q}$, with the goal of choosing an assignment for the variables which maximizes the number of satisfied equations.
We construct a vector $y_k$ for every variable $x_k$ such that shifts of
$y_k$ correspond to an assignment to $x_k$.
We pick a random vector $z_{ij}$ corresponding to a constraint on 
variables $x_i,x_j$ and place a copy of $z_{ij}$ at specific locations in $y_i$ and $y_j$.
Shifts of $y_i,y_j$ corresponding to satisfying assignments of
the constraint results in a superposition of the copies of $z_{ij}$.
We choose parameters so that the only non-trivial contributions to
the objective function (\ref{mle_argmax}) come from such superposition.
The value of the objective is (within small error) a scaled version of the 
number of constraints of the $\Gamma$\np{-MAX-2LIN}($q$) instance
satisfied by the assignment corresponding to the shifts.
Thus hardness results for $\Gamma$\np{-MAX-2LIN}($q$) directly 
translate to hardness results for the alignment problem.
The details are given in Appendix \ref{section_appendix_nphard}.
\end{proof}

The discrete optimization problem (\ref{mle_argmax}) may be formulated using indicator variables as an integer programming problem \begin{align}
\label{sdp_integerprogramming}
\argmax_{\{u_{ik}\}}\ \sum_{i,j=1}^N \sum_{k,l \in \mathbb{Z}_L} u_{ik}u_{jl} \langle R_{-k} y_i, R_{-l} y_j \rangle,
\end{align} where $u_{ik} \in \{0,1\}$ is the indicator variable $u_{ik} = \delta\{l_i \equiv k\}$. 
View $U_{ik;jl} = u_{ik}u_{jl}$ as an entry of the Gram matrix $U \in \mathbb{R}^{NL \times NL}$, and $C_{ik;jl} = \langle R_{-k} y_i, R_{-l} y_j \rangle$ as an entry of the data Gram matrix $C \in \mathbb{R}^{NL \times NL}$. (\ref{sdp_integerprogramming}) can be written as the maximization problem $\tr(CU)$ subject to $\rank(U) = 1$, and any other constraints needed to enforce that $U$ is a Gram matrix of indicator variables. This suggests the possibility of a spectral rounding algorithm, where we try to determine the indicator variables by examining the top eigenvectors of $C$.

\begin{lemma}
\label{mle_gram}
The data Gram matrix $C$ with entries $C_{ik;jl} = \left\langle R_{-k} y_i, R_{-l} y_j \right\rangle$ satisfies:
\begin{enumerate}
\item $C \succeq 0$ and has rank $L$, with non-zero eigenvalues $\lambda_k = L \sum_{i=1}^N \left|\mathcal{F}(Y_i,k)\right|^2.$
\item There is a unitary matrix $\mathcal{P}$ for which $\mathcal{P}C\mathcal{P}^* = \diag(\mathcal{C}_0, \ldots, \mathcal{C}_{L-1})$ is block diagonal, where each $\mathcal{C}_k \in \mathbb{C}^{N \times N}$ is a rank $1$ matrix.
\end{enumerate}
\end{lemma}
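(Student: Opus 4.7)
The plan is to expose a Gram factorization of $C$ and then simultaneously diagonalize its circulant blocks via the discrete Fourier transform. First I introduce $Y_i\in\mathbb{R}^{L\times L}$, the matrix whose $k$-th column is $R_{-k}y_i$, and concatenate horizontally to form $Y:=[Y_1\mid\cdots\mid Y_N]\in\mathbb{R}^{L\times NL}$. By the definition of $C_{ik;jl}=\langle R_{-k}y_i,R_{-l}y_j\rangle$, we have the clean identity $C=Y^\top Y$. This immediately yields $C\succeq 0$ and $\rank(C)\leq L$; the equality $\rank(C)=L$ will fall out (generically) from the eigenvalue computation below.

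Next I analyse the block structure. Writing the $(i,j)$-block as $C_{ij}=Y_i^\top Y_j$, its entries satisfy
\[
(C_{ij})_{kl}=y_i^\top R_{-k}^\top R_{-l}y_j=y_i^\top R_{k-l}\,y_j,
\]
since cyclic shifts are orthogonal with $R_{-k}^\top=R_k$. Because the entry depends only on $k-l$, each block $C_{ij}$ is a circulant $L\times L$ matrix. All such matrices are simultaneously diagonalized by the unitary DFT $F$ with $F_{km}=L^{-1/2}e^{-2\pi\imath km/L}$, and a one-line convolution calculation gives that the eigenvalue of $C_{ij}$ at frequency $k$ equals $\mathcal{F}(y_i,k)\,\overline{\mathcal{F}(y_j,k)}$ up to the constant picked up by the chosen DFT normalization.

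For part (2) I then take $\mathcal{P}:=\Pi\cdot(I_N\otimes F)$, where $\Pi$ is the permutation matrix that sends the basis vector indexed by $(i,k)\in[N]\times\mathbb{Z}_L$ to the one indexed by $(k,i)$. Applying $I_N\otimes F$ diagonalizes every $L\times L$ block of $C$ individually; conjugating further by $\Pi$ regroups the entries so that, for each fixed frequency $k$, an $N\times N$ block $\mathcal{C}_k$ emerges with
\[
(\mathcal{C}_k)_{ij}\;\propto\;\mathcal{F}(y_i,k)\,\overline{\mathcal{F}(y_j,k)}.
\]
This is manifestly the outer product $v_k v_k^*$ with $v_k=(\mathcal{F}(y_1,k),\dots,\mathcal{F}(y_N,k))^\top\in\mathbb{C}^N$, hence $\mathcal{C}_k$ is rank one with unique nonzero eigenvalue $\|v_k\|^2=\sum_{i=1}^N |\mathcal{F}(y_i,k)|^2$. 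Collecting the $L$ blocks and carrying the DFT normalization through produces $\lambda_k=L\sum_i |\mathcal{F}(y_i,k)|^2$ and confirms $\rank(C)=L$ whenever these sums are nonzero.

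The only delicate point is bookkeeping: keeping the shift sign convention, the sign in the DFT exponential, and the $L^{-1/2}$ normalization consistent so that the factor of $L$ appears exactly where the statement asks for it. The structural content of the lemma — Gram factorization, blockwise circulance, and the rank-one Fourier blocks — is essentially forced by the symmetry $R_{k-l}$ of the entries.
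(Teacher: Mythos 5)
Your proposal is correct and follows essentially the same route as the paper's proof (Lemma \ref{appendix_mle_gram}): a Gram factorization giving $C\succeq 0$ and $\rank(C)\le L$, recognition of the circulant $L\times L$ blocks, block diagonalization by a permutation combined with the DFT, and identification of each Fourier block as a rank-one outer product whose trace gives $\lambda_k = L\sum_i|\mathcal{F}(y_i,k)|^2$. The only (immaterial) difference is that you apply $I_N\otimes F$ first and then permute, whereas the paper permutes to block-circulant form first and then applies $DFT_L\otimes I_N$.
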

\begin{proof}
Refer to Lemma \ref{appendix_mle_gram}.
\end{proof}

For random signals $x$, Lemma \ref{mle_gram} indicates that the spectral gap between the top $L$ eigenvalues and the remaining eigenvalues of $C$ will be large, on the order of $\min_k \sum_{i} |\mathcal{F}(Y_i,k)|^2  = \Omega(LN)$. Hence we may want to try and recover a solution to \ref{sdp_integerprogramming} by examining the eigenvectors associated with the top eigenvalues in the spectrum of $C$. In particular, in the noiseless case with $\sigma = 0$, the indicator vector $\mathds{1}\{ik: k \equiv l_i\} \in \mathbb{R}^{NL}$ lies in the span of the top $L$ eigenvectors of $C$. 

Unfortunately, this spectral gap will not be apparent for a large class of signals. As long as a single power spectra $|\mathcal{F}(x,k)|^2$ of $x$ is near zero, the corresponding eigenvalue $\lambda_k$ will separate less from the small eigenvalues of $C$, and hence the space of the top $L$ eigenvectors of $C$ will contain less relevant information. For this reason, it would only be meaningful for us to characterize the SNR by the spectral gap $\min_k |\mathcal{F}(x, k)|^2$. Furthermore, our simulations suggest that recovery from a spectral relaxation performs worse than the semidefinite relaxation we are about to propose.

\section{Semidefinite relaxation}
\label{section_sdp}

The quadratic form constraints  
\begin{align}
\sum_{k,l \in \mathbb{Z}_L} u_{ik}u_{jl} = 1, &\quad i, j \in [N] \notag \\
u_{ik}u_{il} = 0, &\quad i \in [N],\ k \neq l  \in \mathbb{Z}_L \notag \\
u_{ik}u_{jl} \ge 0, &\quad i, j \in [N],\ k, l \in \mathbb{Z}_L. \notag
\end{align}
on $u_{ik} \in \mathbb{R}$ enforce that the $u_{ik}$'s are indicator variables (up to global sign, which cannot be fixed by quadratic constraints). The global shift ambiguity of the problem guarantees the existence of $L$ different solutions. In fact, we will attempt to find a candidate solution in the span of lifted versions of these.
This can be more succintly written as an optimization problem for a rank $L$ matrix $V$: \begin{align}
\max_V &\quad \tr(CV) \label{sdp_rankoneformulation} \\
\text{subject to} 
&\quad C,V \in \mathbb{R}^{NL \times NL},\ C_{ik;jl} = \langle R_{-k}y_i, R_{-l}y_j \rangle \notag \\
&\quad \sum_{k,l} V_{ik;jl} = 1,\ V_{ik;il} = 0,\ V \ge 0 \text{ for } k \neq l,\ V \succeq 0. \notag 
\end{align} 
Of the imposed constraints on $V$, the only non-convex constraint is that of rank deficiency, which obstructs the use of convex programming techniques. Removing this rank constraint yields a semidefinite program.
This SDP is extremely similar to (and motivated by) SDPs commonly used to approximate solutions to certain constraint satisfaction problems (\textsf{CSP}s), notably \textsf{Unique-Games} 
instances.
An \textsf{Unique-Games} instance consists of a graph $G = ([N], E)$, a label set $\mathbb{Z}_L$, and a set of permutations $\pi_{ij} : \mathbb{Z}_L \to \mathbb{Z}_L$. The problem is to choose the best assignment of labels to each vertex such that as many of the permutations $(\pi_{ij})_{(i,j) \in E}$ are satisfied. \textsf{$\Gamma$-MAX-2LIN($L$)} is special case where the permutations $\pi_{ij}$ are cyclic. 
In our notation, the SDP studied for \textsf{Unique-Games} is usually of the form
\begin{align}
\max_V \quad & \frac 12 \tr(CV) \label{sdp_uniquegames} \\ 
\text{subject to} \quad & C,V \in \mathbb{R}^{NL \times NL},\ C_{ik;jl} = \delta\{l = \pi_{ij}(k)\}, \notag \\
& \sum_{k} V_{ik;ik} = 1, \ V_{ik;il} = 0 \text{ for } k \neq l,\ V \ge 0, \ V \succeq 0. \notag 
\end{align} 
This formulation attempts to count the number of satisfied edge constraints for an \textsf{Unique-Games} instance.
One can treat the SDP (\ref{sdp_rankoneformulation}) as an instance of \textsf{$\Gamma$-MAX-2LIN($L$)} on a weighted complete graph, with each cyclic permutation weighted by $C_{ik;jl} = \langle R_{-l_i}y_i, R_{-l_j}y_j \rangle$. 
In this context, the matrix $C$ is dubbed the label-extended adjacency matrix \cite{kolla10_spectralUG}. Thus the significant body of literature conducted on \textsf{Unique-Games} may be useful in understanding (\ref{sdp_rankoneformulation}). Another common feature of the aligment SDP with
$\Gamma$\textsf{-MAX-2LIN($L$)} instances is that the assigned labels may be chosen upto cyclic symmetry. This induces a block circulant symmetry in the semidefinite program (see Lemma \ref{appendix_staralgebra_groupinvariant}). For example, this symmetry will allow us to presume that $V_{ik;ik} = 1/L$ from the constraint $\sum_k V_{ik;ik} = 1$.

A major feature of (\ref{sdp_rankoneformulation}) which does not feature in the study of \textsf{Unique-Games} is the structure of the data coefficient matrix $C$. While \textsf{Unique-Games} specifies constraints on edges of a graph (there are $N^2$ pieces of information), the alignment problem only specifies information on its vertices ($N$ pieces of information). This does assist our understanding of the semidefinite program, since it enables us to apply more symmetry conditions, but it also significantly complicates our analysis. 

An interpretation of the constraint $V \ge 0$ is that it enforces triangle inequality constraints $\|v_{ik} - \bold{0}\| + \|\bold{0} - v_{jl}\| \ge \|v_{ik} - v_{jl}\|$ \cite{Charikar_Makarychev_UG}. 
As we see in the next section, as well as from empirical results, the constraint $V \ge 0$ causes the SDP solution matrix to stabilize more around integral instances.
Interestingly, without this positivity constraint, the SDP may be solved in closed form, and is effectively equivalent to pairwise phase correlation. 

\begin{theorem}
\label{sdp_phasecorrelationequivalence}
Let V be a solution of (\ref{sdp_uniquegames}) without the positivity constraints. Then, $V$ has rank $L$, corresponding to one eigenspace of eigenvalue $N/L$. This eigenspace contains the vector $v^{\mathsf{phase}} \in \mathbb{R}^{NL}$ satisfying \[
\left\{v^{\mathsf{phase}}_{il}\right\}_l = \mathcal{F}^*\left\{ \frac{\mathcal{F}(y_1,l) \mathcal{F}^*(y_i,l)}{|\mathcal{F}(y_1,l) \mathcal{F}^*(y_i,l)|}\right\}_l
\] which is the concatenation of phase correlation vectors (see Appendix \ref{appendix_misc}) between $y_1$ and $y_i$.
\end{theorem}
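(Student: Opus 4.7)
The plan is to use the block-circulant symmetry of the problem to decouple the SDP, via a Fourier change of basis, into $L$ independent Hermitian SDPs of size $N\times N$, each of which admits a rank-one closed-form solution. By Lemma \ref{appendix_staralgebra_groupinvariant}, simultaneously shifting every $l_i$ by a common amount leaves both the objective and the constraints invariant, so I may restrict to optimal $V$ whose $L\times L$ blocks $V^{(ij)}$ are each circulant; the data matrix $C$ already has this form because $\langle R_{-k}y_i, R_{-l}y_j\rangle$ depends only on $l-k$. Applying the unitary $I_N\otimes F$ (with $F$ the $L\times L$ DFT) and reordering produces block-diagonal matrices with $L$ complex $N\times N$ blocks $D^{(p)}$ and $W^{(p)}$. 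The cross-correlation theorem gives $D^{(p)} = b^{(p)}(b^{(p)})^*$ with $b^{(p)}_i = \overline{\mathcal{F}(y_i,p)}$, so $D^{(p)}$ is rank-one Hermitian PSD. The constraint that $V^{(ii)}$ be diagonal, combined with circularity, forces $V^{(ii)}=(1/L)I$ and hence $W^{(p)}_{ii}=1/L$ for every $i,p$; the condition $V\succeq 0$ is equivalent to $W^{(p)}\succeq 0$ for every $p$; and no further coupling between frequencies remains.

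Each subproblem then reads
\[
\max_{W^{(p)}\succeq 0,\ W^{(p)}_{ii}=1/L}\ (b^{(p)})^* W^{(p)} b^{(p)}.
\]
Writing $W^{(p)} = G^*G$ as a Gram matrix whose columns $g_i\in\mathbb{C}^N$ satisfy $\|g_i\|^2=1/L$, the objective equals $\|\sum_i b^{(p)}_i g_i\|^2$, which by the triangle inequality is at most $(1/L)(\sum_i|b^{(p)}_i|)^2$, with equality iff the vectors $b^{(p)}_i g_i$ are positively collinear. This bound is attained by the rank-one matrix $W^{(p)} = (1/L)\,\tilde b^{(p)}(\tilde b^{(p)})^*$ with $\tilde b^{(p)}_i = b^{(p)}_i/|b^{(p)}_i|$; since $|\tilde b^{(p)}_i|=1$ the diagonal constraint holds, and the single nonzero eigenvalue of $W^{(p)}$ equals $\|\tilde b^{(p)}\|^2/L = N/L$. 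Assembling the blocks, $V$ has rank exactly $L$, with every nonzero eigenvalue equal to $N/L$, giving the first assertion of the theorem.

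It remains to show that $v^{\mathrm{phase}}$ lies in this $L$-dimensional eigenspace. The image of $V$ corresponds, in Fourier coordinates, to the direct sum $\bigoplus_p \mathbb{C}\,\tilde b^{(p)}$. Taking the DFT of $(v^{\mathrm{phase}}_{il})_l$ at frequency $p$ returns exactly $\mathcal{F}(y_1,p)\mathcal{F}^*(y_i,p)/|\mathcal{F}(y_1,p)\mathcal{F}^*(y_i,p)| = \overline{\tilde b^{(p)}_1}\,\tilde b^{(p)}_i$, a nonzero scalar multiple of $\tilde b^{(p)}_i$. Hence, pulled back through the Fourier isometry, $v^{\mathrm{phase}}$ lies in the image of $V$, as claimed. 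The main piece of bookkeeping I anticipate is tracking conjugation conventions during the Fourier change of basis so as to verify that the chosen $\tilde b^{(p)}$ automatically satisfy the reality relation $\tilde b^{(L-p)} = \overline{\tilde b^{(p)}}$ (which follows from the $y_i$ being real, so that the reconstructed $V$ is real), together with the mild case-check at the self-conjugate frequencies $p=0$ and, if $L$ is even, $p=L/2$, where $b^{(p)}$ is real and the phase $\tilde b^{(p)}_i$ reduces to a sign.
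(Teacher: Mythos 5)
Your proposal is correct and follows essentially the same route as the paper: reduce to a block-circulant $V$ by symmetry, block-diagonalize via the DFT into $L$ Hermitian $N\times N$ subproblems with rank-one data blocks, solve each in closed form by normalizing the phases (your Gram-vector/Cauchy--Schwarz bound is the same bound the paper gets from the entrywise PSD inequality $|(\mathcal{V}_k)_{ij}|\le 1/L$), and verify that the frequency-$p$ component of $v^{\mathsf{phase}}$ is a scalar multiple of the corresponding rank-one factor. The conjugation/reality bookkeeping you flag is exactly the constraint $\mathcal{V}_k=\overline{\mathcal{V}_{L-k}}$ the paper records, so nothing is missing.
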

\begin{proof}
By symmetry, assume without loss of generality that $V$ is block circulant (see Lemma \ref{appendix_staralgebra_groupinvariant}). Let $\mathcal{P}$ be the unitary matrix defined in Lemma \ref{mle_gram} and $e\left(\tfrac{kl}L\right)$ denote the classical Fourier basis function $e(x) = e^{2\pi i x}$. Then, $\mathcal{P}V\mathcal{P}^* = \diag(\mathcal{V}_0, \ldots, \mathcal{V}_{L-1})$ where $
\mathcal{V}_k = \sum_{l = 0}^{L-1} e\left(\tfrac{kl}L\right) \, V_l \in \mathbb{C}^{N \times N}
$ is positive semidefinite. The SDP constraints \[
V \in \mathbb{R}_{NL \times NL}, \quad V \succeq 0, \quad V_{ik;il} = 0 \text{ for } k \neq l, \quad \sum_{k,l} V_{ik;jl} = 1
\] are respectively equivalent to the Fourier side constraints 
\begin{align}
\label{sdp_fourierconstraints}
\mathcal{V}_k = \overline{\mathcal{V}_{L-k}}, \quad \mathcal{V}_k \succeq 0, \quad (\mathcal{V}_k)_{ii} = 1/L, \quad (\mathcal{V}_0)_{ij} = 1/L.
\end{align}
Since $\mathcal{V}_k \succeq 0$, the absolute value of each entry is bounded by the maximum of its diagonal entries, so each entry has magnitude at most $1/L$. Hence \begin{align}
\tr(CV) &= \sum_{k \in \mathbb{Z}_L} \tr(\mathcal{C}_k\mathcal{V}_k) \le \frac 1L \sum_{k \in \mathbb{Z}_L} \sum_{ij \in [N]} |(\mathcal{C}_k)_{ij}|,
\end{align} 
with equality occuring when $\mathcal{V}_k$ has the entries of $\mathcal{C}_k$, normalized to magnitude $1/L$. Then $(\mathcal{V}_k)_{ij} = \frac{\mathcal{F}(y_i,l) \mathcal{F}^*(y_j,l)}{|\mathcal{F}(y_i,l) \mathcal{F}^*(y_j,l)|}
$, and a basis of the non-trivial eigenspace of $V$ is given by \[
W := \mathcal{P}^* \left((\mathcal{V}_k)_{i,1}\right)_{i,k} =
\left(e(\tfrac{-kl}L) (\mathcal{V}_k)_{i,1}\right)_{ik;l}
\] Note the selection of the observation $1$ is arbitrary. The vector $v^{\mathsf{phase}} \in \mathbb{R}^{NL}$ given by \[
v^{\mathsf{phase}}_{ik} := (W \bold{1}_{N \times 1})_{ik} 
= \left[\mathcal{F}^*\left\{ \frac{\mathcal{F}(y_1,l) \mathcal{F}^*(y_i,l)}{|\mathcal{F}(y_1,l) \mathcal{F}^*(y_i,l)|} \right\}_{l=0}^{L-1} \right]_k
\] lies in the image of $W$ and it is easy to see that the $L$ vectors $W^{\mathsf{phase}} \in \mathbb{R}^{NL \times L}$ generated by circulating the $N$ blocks of $L$ entries of $v^{\mathsf{phase}}$ are linearly independent. Indeed, the top $L \times L$ block of $W^{\mathsf{phase}}$ is the identity matrix $I_L$. 
\end{proof}


From the SDP solution, one must round back to a solution in the original search space. There is a significant body of literature on the topic of rounding the solutions to various SDPs for Unique Games. 
The analysis and guarantees for these rounding schemes are in terms of the
number of constraints satisfied by the solution produced and do not
immediately give a result about signal recovery in our setting.
In their study of semi-random instances of \np{Unique-Games}, the authors of \cite{Kolla_Makarychev2_UG} give a rounding technique that uses both SDP and LP solutions when the SDP is known to be somewhat sparse -- this is similar to a condition we obtain in the following stability section.
Exploiting these ideas to establish an exact signal recovery guarantee is an interesting
open problem.

\section{Stability}

For simplicity, without loss of generality align the ground truth shifts of the observations, so that $y_i = x + \xi_i$, where $\xi_i \sim \mathcal{N}(0, \sigma^2 I_{L})$ i.i.d. The ground truth integral instance for the SDP will be (upto block cyclic symmetry) the indicator matrix $V^{int} \in \mathbb{R}^{NL \times NL}$, defined as $V_{ik;jl}^{int} = \delta_{k=l}/L$. If there is a set of labels $\boldsymbol{\ell} = (l_i)_i$ which are pairwise optimal, in the sense that $\langle R_{-l_i}y_i, R_{-l_j}y_j \rangle = \max_{k} \langle y_i, R_{-k} y_j \rangle$ for all pairs $(i,j)$, the SDP (\ref{sdp_rankoneformulation}) will produce the integral instance. While this may be likely in a very high signal-to-noise setting, it is a rather stringent condition. We relax this condition and show that the SDP solution must still resemble the integral instance at a reasonably high SNR. The exact definition for the SNR will be deferred for later, but will be characterized in terms of the gap 
\begin{align}
\label{stability_delta}
\Delta = \|x\|^2 - \max_{l \neq 0} \langle x, R_l x \rangle. 
\end{align}

For any $V \in \mathbb{R}^{NL \times NL}$ lie in the SDP feasibility region, we can characterize the distance of $V$ from the integral instance by the differences $$D_{ij} = \sum_{k \neq l \in \mathbb{Z}_L} V_{ik;jl} = 1 - \sum_{k \in \mathbb{Z}_L} V_{ik;jk} \in [0,1].$$ $D_{ij}$ is a measure of how much the SDP would weight shift preferences other than the ground truth. When all $D_{ij} = 0$ we obtain the integral instance. 

For convenience, we make the definitions $\xi_0 = 2x$ and $\eta_{ij} = 2 \max_l |\langle R_l \xi_i, \xi_j \rangle| \ge 0$ for $i,j = 0, 1, \ldots, N$. The following lemma demonstrates that the $\eta_{ij}$'s can be treated as worst-case bounds on the noise terms of the entries of the data Gram matrix $C$.

\begin{lemma}
\label{appendix_stability_ineq1}
If $\tr(CV) \ge \tr(CV^{int})$, then $\sum_{i \neq j} (\Delta - \eta_{ij} - \eta_{j0}) D_{ij} \le 0$. 
\end{lemma}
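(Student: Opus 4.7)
The plan is to compare $\tr(CV)$ with $\tr(CV^{\mathrm{int}})$ block by block in the natural $N{\times}N$ partition of the $NL{\times}NL$ matrices, and use entrywise positivity of $V$ together with the decomposition $y_i = x + \xi_i$ (with $\xi_0 = 2x$) to convert the objective gap into an expression in the $D_{ij}$, $\Delta$ and the $\eta$'s.

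First I would observe that on the diagonal blocks $i=j$ the constraints $V_{ik;il}=0$ for $k\ne l$ combined with the symmetry-reduction identity $V_{ik;ik}=1/L$ (noted just before the lemma) force $V$ to agree with $V^{\mathrm{int}}$, so these blocks contribute nothing. Writing $c_{ij}(m):=\langle y_i,R_m y_j\rangle$, so that $C_{ik;jl}=c_{ij}(k-l)$, and splitting each off-diagonal block into its $k=l$ and $k\neq l$ parts, the $V$ contribution on the $(i,j)$ block becomes $c_{ij}(0)(1-D_{ij})+\sum_{k\ne l}c_{ij}(k-l)V_{ik;jl}$, while $V^{\mathrm{int}}$ contributes $c_{ij}(0)$. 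The nonnegativity $V\ge 0$ lets me upper-bound each off-diagonal coefficient by its maximum, producing
\[
\tr(CV)-\tr(CV^{\mathrm{int}})\;\le\;\sum_{i\ne j} D_{ij}\bigl[\max_{m\ne 0}c_{ij}(m)-c_{ij}(0)\bigr].
\]

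The second step is to control $c_{ij}(m)-c_{ij}(0)$ for $m\ne 0$. Expanding $y_i=x+\xi_i$ gives four differences: a pure-signal piece $\langle x,R_mx\rangle-\|x\|^2$, which is at most $-\Delta$ by definition; two mixed signal-noise pieces which, after rewriting $x=\xi_0/2$, take the form $\tfrac12[\langle\xi_0,R_m\xi_j\rangle-\langle\xi_0,\xi_j\rangle]$ and its $i,j$-swapped analogue, bounded in absolute value by $\eta_{j0}/2$ and $\eta_{i0}/2$ respectively; and a pure-noise piece $\langle\xi_i,R_m\xi_j\rangle-\langle\xi_i,\xi_j\rangle$ bounded by $\eta_{ij}$. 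Summing the four pieces yields $\max_{m\ne 0}c_{ij}(m)-c_{ij}(0)\le -\Delta+\tfrac12\eta_{i0}+\tfrac12\eta_{j0}+\eta_{ij}$.

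The last step is symmetrization. Since $V$ is PSD, $D_{ij}=D_{ji}$, so relabeling $i\leftrightarrow j$ shows $\sum_{i\ne j}D_{ij}\cdot\tfrac12\eta_{i0}=\sum_{i\ne j}D_{ij}\cdot\tfrac12\eta_{j0}$, merging the two half-terms into a single $\eta_{j0}$. Combining this with the hypothesis $\tr(CV)\ge\tr(CV^{\mathrm{int}})$ produces exactly $\sum_{i\ne j}(\Delta-\eta_{ij}-\eta_{j0})D_{ij}\le 0$. The main obstacle is the bookkeeping in the second step: one must track the factor of $1/2$ introduced by the convention $\xi_0=2x$ so that each mixed signal-noise term contributes $\tfrac12\eta$ rather than $\eta$, and recognize that the PSD symmetry $D_{ij}=D_{ji}$ is precisely what turns the asymmetric bound $\tfrac12\eta_{i0}+\tfrac12\eta_{j0}$ into the symmetric $\eta_{j0}$ that appears in the lemma statement.
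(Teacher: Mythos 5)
Your proposal is correct and follows essentially the same route as the paper's proof: bound the off-diagonal-within-block contribution using $V\ge 0$ and the maximum over nonzero shifts to reduce the objective gap to $\sum_{i\ne j}D_{ij}\bigl[\max_{m\ne 0}\langle R_m y_i,y_j\rangle-\langle y_i,y_j\rangle\bigr]$, then apply the same pessimistic expansion $y_i=x+\xi_i$ yielding $-\Delta+\tfrac12\eta_{i0}+\tfrac12\eta_{j0}+\eta_{ij}$ and symmetrize via $D_{ij}=D_{ji}$. The only cosmetic difference is that you sum over the full $L\times L$ block rather than invoking the block-circulant reduction to a single column, which is an equivalent computation.
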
 
\begin{proof}
We can find a block circulant matrix which attains the same SDP objective value as $V$, so without loss of generality presume $V$ is block circulant. Hence $\sum_k V_{ik;j0} = 1/L$ and $D_{ij} = 1 - LV_{i0;j0}$. Expanding,
\begin{align*}
0 &\ge \tr(CV^{int}) - \tr(CV) = 
\sum_{i,j} \sum_{k} \langle R_{-k}y_i, y_j \rangle \left(\delta_{k = 0} - L V_{ik;j0} \right)
\\
&\ge \sum_{i,j} \bigg( \langle y_i, y_j \rangle (1 - L V_{i0;j0}) - \max_{l \neq 0} \langle R_ly_i, y_j \rangle \sum_{k \neq 0} L V_{ik;j0} \bigg)\\
&= \sum_{i,j} \left( \langle y_i, y_j \rangle - \max_{l \neq 0} \langle R_ly_i, y_j \rangle \right) D_{ij}.
\end{align*}
The second inequality requires the SDP constraint $V \ge 0$.
The pessimistic bound 
\begin{align*}
\langle y_i, y_j \rangle - \max_{l \neq 0} \langle R_ly_i, y_j \rangle &\ge \|x\|^2 + \langle x,\xi_i\rangle + \langle x,\xi_j \rangle + \langle \xi_i, \xi_j \rangle - \max_{l \neq 0} \langle x, R_l x \rangle  \\
&\qquad - \max_{l \neq 0} \langle R_lx,\xi_i\rangle - \max_{l \neq 0}  \langle R_lx,\xi_j \rangle - \max_{l \neq 0}  \langle R_l \xi_i, \xi_j \rangle  \\
&\ge \Delta - ( \eta_{i0}/2 +  \eta_{j0}/2 +  \eta_{ij}),
\end{align*}
and rearrangement gives the desired inequality. 
\end{proof}

\begin{theorem}
\label{stability_theorem}
With probability $1 - e^{-N + o(N)}$, the solution to the SDP satisfies $$
\sum_{i,j} D_{ij} \le \frac{(\|x\| + \sigma^2 \sqrt{L}) \cdot 12\log eL }{\Delta} \cdot N^2.
$$
\end{theorem}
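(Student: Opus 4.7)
The strategy is to combine the deterministic inequality just established (Lemma~4.3) with Gaussian concentration on the noise quantities $\eta_{ij}$ and $\eta_{j0}$. The first step is a deterministic reduction: using $D_{ij}\in[0,1]$ and $\eta_{ij},\eta_{j0}\ge 0$, the inequality $\sum_{i\ne j}(\Delta-\eta_{ij}-\eta_{j0})D_{ij}\le 0$ rearranges to
$$\Delta\sum_{i\ne j}D_{ij} \;\le\; \sum_{i\ne j}(\eta_{ij}+\eta_{j0})D_{ij} \;\le\; \sum_{i\ne j}(\eta_{ij}+\eta_{j0}),$$
so it suffices to bound the total noise $\sum_{i\ne j}(\eta_{ij}+\eta_{j0})$ by $(\|x\|+\sigma^2\sqrt{L})\cdot 12\log(eL)\cdot N^2$ with the stated probability.

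The second step handles the individual noise terms via Gaussian maxima. For $\eta_{j0}=4\max_l|\langle R_l\xi_j,x\rangle|$, each inner product is $\mathcal{N}(0,\sigma^2\|x\|^2)$, so a standard Gaussian maximal inequality yields $\mathbb{E}[\eta_{j0}]\lesssim \sigma\|x\|\sqrt{\log L}$ with sub-Gaussian tails at scale $\sigma\|x\|$. For $\eta_{ij}=2\max_l|\langle R_l\xi_i,\xi_j\rangle|$ with $i\ne j$ both nonzero, conditioning on $\xi_i$ makes each inner product a centered Gaussian of variance $\sigma^2\|\xi_i\|^2$; combined with concentration of $\|\xi_i\|^2$ around $\sigma^2 L$, this gives $\mathbb{E}[\eta_{ij}]\lesssim \sigma^2\sqrt{L\log L}$ with sub-exponential tails at scale $\sigma^2\sqrt{L}$. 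Summing the expectations over $(i,j)$ already gives a mean of order $N^2(\sigma\|x\|+\sigma^2\sqrt{L})\sqrt{\log L}$ for the noise sum, matching the claim up to the $\sqrt{\log L}$ vs.\ $\log(eL)$ slack.

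The third step is concentration of the full sum. Since the $N$ variables $\eta_{j0}$ are i.i.d.\ and sub-Gaussian, standard Bernstein-type concentration yields $\sum_j\eta_{j0}\le cN\sigma\|x\|\sqrt{\log L}$ with failure probability $e^{-\Omega(N)}$. The $\eta_{ij}$ are correlated through shared $\xi_i$'s, but a Hanson--Wright-style bound for suprema of Gaussian bilinear chaos (or, alternatively, conditioning on the $\xi_i$'s and invoking Lipschitz Gaussian concentration in the remaining randomness) delivers exponential-in-$N$ concentration for $\sum_{i\ne j}\eta_{ij}$. Combining these and dividing by $\Delta$ gives the theorem. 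The principal obstacle is securing the failure probability $e^{-N+o(N)}$: a naive union bound over the $O(N^2)$ noise terms would inflate each deviation by $\sqrt{N}$ and spoil the $N^2$ scaling, so one must concentrate the entire sum directly---routine for $\sum_j\eta_{j0}$ via Bernstein, but delicate for $\sum_{i\ne j}\eta_{ij}$, where the chaos structure must be handled with care. Tracking constants carefully to recover the precise $12\log(eL)$ coefficient is a secondary but nontrivial bookkeeping task.
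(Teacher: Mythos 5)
Your plan reproduces the paper's architecture: the deterministic reduction from Lemma \ref{appendix_stability_ineq1} together with $D_{ij}\le 1$, Gaussian maximal inequalities for the conditional moments of $\eta_{ij}$ and $\eta_{j0}$ (Lemma \ref{stability_tailbounds_eta}, whose cruder bound $\mu_\eta = 2\sigma(\log L+1)$ is the source of the $\log(eL)$ rather than the $\sqrt{\log L}$ you correctly note would be the sharp order), and Laurent--Massart $\chi^2$ concentration to replace $\sum_i\|\xi_i\|$ by $\sigma N\sqrt{5L}$. The only genuine divergence is at the step you yourself flag as delicate --- concentrating $\sum_{i\ne j}\eta_{ij}$ despite the dependence through shared $\xi_i$'s --- where you reach for Hanson--Wright-type chaos bounds or Lipschitz Gaussian concentration. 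The paper's route (Lemma \ref{stability_independenttailbounds}) is more elementary: fix $i$ and condition on $\xi_i$; then $\{\eta_{ij}\}_{j\ne i}$ are independent nonnegative random variables, so Maurer's one-sided inequality for sums of independent nonnegative variables controls each row sum $\sum_j(\eta_{ij}+\eta_{j0})$ with failure probability $2e^{-t^2N}$, and one union-bounds over only the $N$ rows rather than the $N^2$ terms, giving the overall $1-2Ne^{-t^2N}=1-e^{-N+o(N)}$. This exploits exactly the conditional row-wise independence you identify, avoids chaos machinery, and sidesteps your worry about deviation inflation. Be warned that your Lipschitz alternative does not work off the shelf: $\sum_{i\ne j}\eta_{ij}$ is quadratic in the underlying Gaussian vector, hence not globally Lipschitz, so that route forces you back to conditioning or truncation anyway. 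With the row-wise Maurer argument substituted for your chaos step, your plan is essentially the paper's proof.
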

\begin{proof}
For sufficiently high SNR, we would expect that the inequality in Lemma \ref{appendix_stability_ineq1} would fail to hold. Indeed, by Lemma \ref{stability_independenttailbounds}, the inequality
\begin{align}
&\sum_{i \neq j} (\eta_{ij} + \eta_{j0}) D_{ij} \le \mathcal{O}(\log L) \cdot \bigg(2\|x\| + \frac 1N\sum_i \|\xi_i\|\bigg) N^2,
\label{stability_independenttailbounds_ineq}
\end{align} holds with probability at least $1 - e^{-N + \mathcal{O}(\log N)}$. It arises from tail bounds on the sum of slightly dependent random variables, and is independent of the structure of the SDP (as opposed to Lemma \ref{appendix_stability_ineq1}). Combined with Lemma \ref{appendix_stability_ineq1}, we can obtain a guarantee on the deviation between the SDP solution and the integral instance. The full proof may be found following Theorem \ref{appendix_stability_theorem}.
\end{proof}

Theorem \ref{stability_theorem} indicates that at a sufficiently high SNR, the UG-based SDP will produce a matrix $V$, of which each $L \times L$ block will have high small values outside of the main diagonal. A rounding scheme would likely interpret this as the identity shift being the optimal shift. This motivates us to choose a definition for the SNR to be along the lines of $SNR = \Delta/[(\|x\| + \sigma^2 \sqrt{L})\log L]$ (for reference, for random signals $x$, we would expect that $\Delta = L - \mathcal{O}(\sqrt{L})$). 
This characterization of SNR is significantly more lenient than that for spectral relaxation, in Section \ref{section_mle}. For example, a sum of a few sinusoids will have several small power spectra, but if the least common multiple of their periods does not divide $L$, then $\Delta$ will still be large. Note furthermore that we may be more flexible in our definition for $\Delta$: for example, suppose there is a set of shifts $\boldsymbol{\ell}^*$ (including the identity element) of size $|\boldsymbol{\ell}^*| = \mathcal{O}(1)$ for which $\max_{l \in \boldsymbol{\ell}^*} \langle x, R_l x \rangle > \Delta + \min_{l \notin \boldsymbol{\ell}^*} \langle x, R_l x \rangle$. The above results may be modified to describe the concentration of the SDP solution on the entries of the shifts in $\boldsymbol{\ell}^*$ (not just along the identity shift) for each pair of observations $y_i, y_j$. 

This result may be strengthened in other manners. For constants $0 < \delta, \varepsilon \ll 1$, we can attain a tighter concentration condition of the form 
\begin{align}
\sum D_{ij} \ge (1+\delta)^2N\sqrt{\sum D_{ij}^2} / SNR + 2\varepsilon N^2 \label{stability_jl_concentration}
\end{align} instead of the current condition $\sum D_{ij} \ge N^2 / SNR$. The argument is based on the analysis of the SDP for adversarial semi-random \textsf{Unique-Games} instances by \cite{Kolla_Makarychev2_UG}. However, the proof must be more nuanced in our case, due to correlations in the noise model of $C$, caused by the smaller source of randomness available to us. Hence we omit the full details, but provide a sketch below.

Any SDP feasible matrix $V$ can alternatively be represented as $V_{ik;jl} = \langle v_{ik}, v_{jl} \rangle / L$ for a set of unit vectors $v_{ik} \in \mathbb{R}^{NL}$. With this notation, $2D_{ij} = 2(1 - L V_{i0;j0}) = \|v_{i0} - v_{j0}\|^2$. The space of these unit vectors can be approximately discretized by a random projection due to the Johnson-Lindenstrauss lemma. Precisely, Lemma \ref{appendix_stability_johnsonlindenstrauss} provides a set of unit vectors $\mathfrak{N}$ of size $|\mathfrak{N}| = \exp\left(\mathcal{O}\left(\delta^{-2} \log^2(1/\varepsilon) \right)\right)$ such that the set of unit vectors $\{v_{i0}\}_{i=1}^N$ can be approximated under a random projection $\varphi$ by an $N$-tuple in $\mathfrak{N}$. Specifically, if $D_{ij}^\varphi = \|\varphi(v_{i0}) - \varphi(v_{j0})\|^2/2$, then the $D_{ij}$'s lie within an $\alpha_{JL}$-ball of $D_{ij}^\varphi$, this ball being defined as $\{D: \alpha_{JL}^{-1}(D^\varphi) \le D \le \alpha_{JL}(D^\varphi)\}$ for $\alpha_{JL}(D) = (1+\delta)D + \varepsilon$.

Instead of finding a global tail bound in Lemma \ref{stability_independenttailbounds}, we can derive a local tail bound for each $\alpha_{JL}$-ball of $D_{ij}^{\varphi}$'s, each tail bound holding with probability at least $1 - 2N e^{-t^2 N}$. For $0 < \delta,\, \varepsilon = \varepsilon' \ll 1$ constants, the number of $N$-tuples of vectors in $\mathfrak{N}$ is of size $\exp(\mathcal{O}(N))$. With a sufficiently large constant $t$, the local tail bound may be union bounded across all balls of vectors in $\mathfrak{N}^N$, and thus will hold for all SDP-feasible $V$. With some care, this argument would yield a concentration condition of the form (\ref{stability_jl_concentration}).

%


\section{Numerical Results}
\label{section_numerics}

\begin{figure}[h]
\centering
\label{numerics_plots}
\includegraphics[scale=0.5]{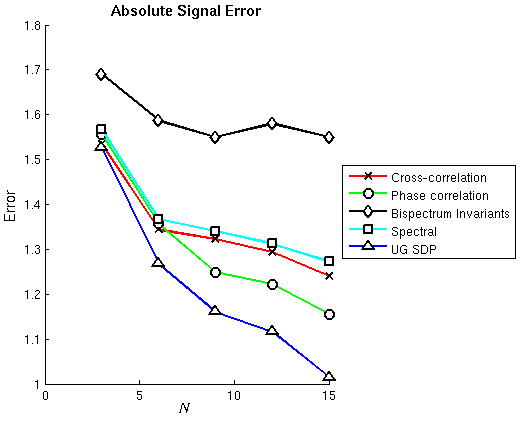}
\hspace{3em}
\includegraphics[scale=0.5]{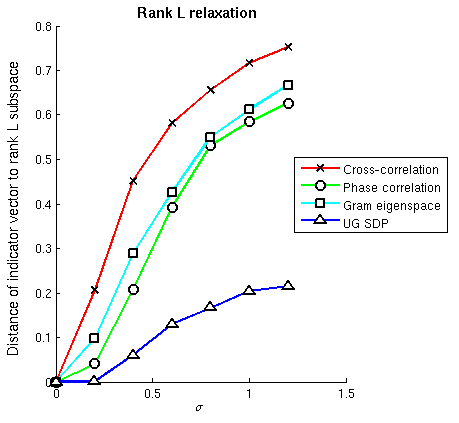}
\caption{Averages of errors of several alignment methods across 500 iterations, with parameters $\sigma = 1, L = 5$.}
\end{figure}

We implemented several baseline methods discussed thus far, and plotted their average error performance across $500$ iterations in Figure 1. For each iteration, we chose a signal $x$ randomly from the distribution $\mathcal{N}(0, I_L)$, as well as $N$ i.i.d noise vectors $\xi_i \sim \mathcal{N}(0, \sigma^2 I_L)$, and apply each of our methods. These simulations confirm our intuition that the UG-based SDP performs better than other benchmark methods. In particular, they suggest that the UG-based SDP is highly stable around integral instances. 

The implementation using bispectrum-like invariants is discussed in Appendix \ref{appendix_misc}. For each of the other procedures, we construct a $NL \times L$ matrix $W$ recording alignment preferences. 
For cross- or phase correlation, let $W_{ik;l}$ be the $(k-l)$th entry of the cross- or phase correlation vector between $y_1$ and $y_i$. For the spectral rounding off the Gram matrix $C$, and the solution of the semidefinite program $V$, let $W$ be the top $L$ eigenvectors of the respective matrix. The shifts are read off this matrix, and the un-shifted $y_i$ are averaged to produce an estimate for $x$. The first plot (a) shows the difference between this estimate and $x$. %

An important problem not very elaborated in this paper is how to determine the best shifts from $W$. A natural method is to identify an indicator vector $\mathds{1}\{ik: l_i \equiv k\}$ lying close to the column span of $W$, for some labelling $\boldsymbol{\ell} = (l_i)_i$. Currently, our procedure is to apply a linear transformation to $W$ such that the top $L \times L$ block of $W$ is the identity matrix $I_L$ (although there is no reason to believe this rounding is robust, it is sufficient for the purpose of having an easy benchmark between multiple methods). Then, for each $i$, the maximal entry in the first column of $W$ gives the choice of shift for $y_i$. The second plot shows the distance this indicator vector lies from the column span of $W$.

Let $W^{\perp} \in \mathbb{R}^{NL-L \times NL}$ be the matrix whose rows form the orthogonal complement of the rows of $W^T$, and $u \in \mathbb{R}^{NL}$ denote our indicator vector. Then $u$ should be a sparse vector near the nullspace of $W^{\perp}$. If we knew one of the non-zero coordinates of $u$, say coordinate $i$, we could recover the rest of $u$ as a sparse solution to $W^{\perp}_{-i} u_{-i} = -W^{\perp}_i$,
where $W^{\perp}_{-i}$ is the submatrix obtained from $W^{\perp}$ by removing
the column $W^{\perp}_i$. The problem can then be solved by
$\ell_1$-minimization, as suggested by the theory of sparse recovery. Although normally, we would be required to run this procedure for every coordinate (to make sure we pick an active one) the symmetries in $W$ imply the existence of $L$ such indicator variables and that any coordinate we pick will be active in one of them.
%
%

\section{Generalizations and Future Work}
\label{section_future}

It is worth noting that the discrete multireference alignment problem naturally generalizes from shifts over $\mathbb{Z}_L$ to actions of finite groups $G$ over finite spaces $S$. In this case, the analogue of phase correlation \cite{loneland10_phase} is defined in terms of the generalized Fourier transform $\mathcal{F}_{G,S}( \cdot, \rho) : \mathbb{C}^S \to \mathbb{C}^{d_\rho \times d_\rho}$, where $\rho: G \to \mathbb{C}^{d_\rho \times d_\rho}$ are irreducible matrix representations. The Unique Games SDP may also be generalized in a natural manner, in rough analogy with SDP variants studied for $\Gamma$-\np{MAX-2LIN} instances. In the case of an abelian group $G = S$, the proof of Theorem \ref{sdp_phasecorrelationequivalence} follows in the same fashion, and hence the alignment UG-based SDP without positivity constraints will remain equivalent to phase correlation. 

The symmetry of the SDP in this case can be described naturally by the theory of $\mathbb{C}*$-matrix algebras (refer to Appendix \ref{appendix_staralgebras}). 
Symmetries in semidefinite programs can be written with respect to linear combinations of $\{0,1\}$ matrices, which form a basis of a $*$-matrix algebra. Under sufficiently nice conditions, this basis can be block diagonalized, for example by Wedderburn's decomposition or by regular $*$-representations \cite{murota_blockdiagonal07}. Hence, the $*$-matrix algebra can be represented by a lower-dimensional block diagonalized version. From a computational perspective, this can make highly symmetric SDPs much more amenable to sparse SDP solvers \cite{dobre11_sdpsymmetry}. In our case, diagonalization by the block DFT allows the SDP to be rewritten as an optimization problem across $L$ PSD matrices of size $N \times N$ instead of one PSD matrix of size $NL \times NL$.

It would be interesting to see how the performance of the SDP changes when we study the alignment problem across more difficult groups, especially non-abelian ones, which arise in several applications.

The numerical simulations in Section \ref{section_numerics} suggest that the UG-based SDP achieves exact recovery with high probability for sufficiently high SNR. That is, the resulting SDP matrix is integral, so by solving the SDP we are indeed obtaining the solution to the quasi-ML estimator. Indeed, as the SNR decreases, there appears to be a phase transition during which the SDP 
almost always recovers an integral solution. Our analysis of the stability of the semidefinite program does not fully explain this phenomenon.
The authors believe this to be an interesting direction for future work, especially since guarantees of
exact recovery are attainable in high SNR settings for a few semidefinite relaxations, for example for the MIMO problem \cite{So_MIMO}.

Another important question is to understand the sample complexity of our approach to the alignment problem. Since the objective is to recover the underlying signal $x$, having a larger number of observations $N$ should give us better recovery. The question can be formalized as: for a given value of $L$ and $\sigma$, how large does $N$ need to be in order to have reasonably accurate recovery? The sample complexity of methods like the bispectral invariants would be expected to require $N = \Omega(\sigma^2 L^2 \log L)$ observations. We would hypothesize on the strength of our numerical results that the UG-based SDP requires fewer observations for meaningful recovery, 
and establishing this is an interesting open problem.

Along with expanding the domain of the alignment problem, it would be interesting to attempt the style of analysis discussed in this paper for other maximum likelihood problems. Maximum likelihood estimators play an important role in many estimation problems, but often (as in our problem) computing or approximating the MLE is a challenging problem and semidefinite programming could provide a tractable alternative in an average case setting.

%

\section*{Acknowledgements}
The authors thank Yutong Chen for valuable assistance with the implementation of our algorithm.
A.~S.~Bandeira was supported by AFOSR Grant No. FA9550-12-1-0317. 
M.~Charikar was supported by NSF grants CCF 0832797, AF 0916218 and AF 1218687.
A. Singer was partially supported by Award Number FA9550-12-1-0317 and FA9550-13-1-0076 from
AFOSR, by Award Number R01GM090200 from the NIGMS, and by Award Number LTR DTD 06-05-2012 from the Simons Foundation.
Parts of this work have appeared in A. Zhu's undergraduate thesis at Princeton University.

\bibliographystyle{alphaabbr}
\bibliography{thesisbib}

\appendix

\section{Phase correlation and the Bispectrum}
\label{appendix_misc}
%



Many scientific fields have isolated discussions and solutions for the alignment problem, occasionally with slight context-specific adjustments. Such methods include iterative template aligning \cite{kosir95_iterate}, zero phase representations \cite{zwart03_HRR}, angular synchronization \cite{sonday2011_sim}, and machine learning \cite{duch05_neural}. Unfortunately, most of these are either do not fairly weight each observation, do not make full use of the available information, or do not have rigorous performance discussion. We outline just a couple of very well-studied alignment methods in this section.

In the case of a pair of noisily shifted vectors ($N = 2$), several long-studied methods assign scores to each possible shift between the vectors, and estimate the shift as the one with the highest score. The most natural is to take the inner product between each possible shift of the vectors $y_i, y_j$, which gives the cross-correlation vector $v_{l}^{\np{cross}} = \langle y_i, R_{-l}y_j \rangle$, with maximal entry $\widehat{l} = \argmax v_l^{\np{cross}}$. 
Another frequently used in practice is the phase correlation vector \begin{align}
v^{\np{phase}} = \mathcal{F}^{*}\left\{\frac{\mathcal{F}(y_i,k) \mathcal{F}^*(y_j,k)}{|\mathcal{F}(y_i,k) \mathcal{F}^*(y_j,k)|}\right\}_{k=0}^{L-1}. \label{notation_phasecorrelation}
\end{align} For human-friendly images, phase correlation tends to be more appealing, since it tends to have a single sharp peak. The relation between cross- and phase correlation can be seen from the convolution theorem:
\begin{lemma}
\label{sdp_convolutionlemma}
Convolution theorem: 
$
\left\{\langle y_i, R_{-l} y_j) \rangle\right\}_{l=0}^{L-1} = \sqrt{L} \cdot \mathcal{F}\left\{\mathcal{F}(y_i,k) \mathcal{F}^*(y_j,k)\right\}_{k=0}^{L-1}.
$
\end{lemma}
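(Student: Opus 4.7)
The statement is the standard cyclic convolution (really cross-correlation) theorem for the discrete Fourier transform, so the plan is simply a direct computation; the only subtlety is being careful with the Fourier normalization convention already implicit in the paper (the unitary DFT, since $\mathcal{F}^{*}$ has been used as the inverse of $\mathcal{F}$ throughout, and the factor of $\sqrt{L}$ on the right-hand side is exactly what one expects from a unitary DFT).

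First I would unpack the left-hand side. By the definition of $R_l$, we have $(R_{-l}y_j)_n = (y_j)_{n+l}$ (indices mod $L$), so
\begin{equation*}
\langle y_i, R_{-l}y_j\rangle = \sum_{n=0}^{L-1} (y_i)_n (y_j)_{n+l}.
\end{equation*}
Next I would unpack the right-hand side by substituting $\mathcal{F}(y_i,k)=\tfrac{1}{\sqrt{L}}\sum_m (y_i)_m e^{-2\pi i k m/L}$ and its conjugate for $y_j$, and then applying $\mathcal{F}$ once more to the product $\mathcal{F}(y_i,k)\mathcal{F}^{*}(y_j,k)$ evaluated at $l$:
\begin{equation*}
\sqrt{L}\cdot\mathcal{F}\{\mathcal{F}(y_i,k)\mathcal{F}^{*}(y_j,k)\}_k(l)
= \frac{1}{L}\sum_{m,n}(y_i)_m(y_j)_n \sum_{k=0}^{L-1} e^{-2\pi i k(m-n+l)/L}.
\end{equation*}

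The final step is to invoke the orthogonality relation $\sum_{k=0}^{L-1} e^{-2\pi i k r/L}=L\cdot\mathds{1}\{r\equiv 0 \bmod L\}$. This collapses the inner sum to $L$ precisely when $n\equiv m+l \pmod L$ and $0$ otherwise, cancelling the $1/L$ prefactor and yielding $\sum_m (y_i)_m (y_j)_{m+l}$, which matches the left-hand side computed above.

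There is essentially no obstacle here beyond bookkeeping; the only thing one must verify is consistency of sign and normalization conventions (specifically, that the $\mathcal{F}^{*}$ outside is the inverse of the $\mathcal{F}$ inside — which is why the prefactor is $\sqrt{L}$ and not $L$ or $1$). If the paper were using a non-unitary DFT, one would simply replace the $\sqrt{L}$ appropriately, so the identity is robust to convention up to a scalar. I would conclude by noting that this identity is the reason phase correlation and cross-correlation differ only by an entrywise normalization of $\mathcal{F}(y_i,k)\mathcal{F}^{*}(y_j,k)$, which is the link the paper uses to motivate the rounding discussion and Theorem~\ref{sdp_phasecorrelationequivalence}.
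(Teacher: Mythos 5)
Your proof is correct. You verify the identity by brute force: expand $\langle y_i, R_{-l}y_j\rangle$ in signal space, expand the right-hand side into a triple sum over $m,n,k$, and collapse the $k$-sum via orthogonality of the characters $e^{-2\pi i kr/L}$. The paper instead gives a two-line argument: by Parseval's theorem $\langle y_i, R_{-l}y_j\rangle = \langle \mathcal{F}y_i, M_{-l}\mathcal{F}y_j\rangle$, where $M_l$ is the modulation operator dual to the shift $R_l$, and the resulting sum $\sum_k e(-kl/L)\,\mathcal{F}(y_i,k)\mathcal{F}^*(y_j,k)$ is then recognized directly as $\sqrt{L}$ times the DFT of the pointwise product. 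The two arguments are mathematically the same fact dressed differently: your character-orthogonality step is exactly what underlies Parseval, so your version is more self-contained and elementary, while the paper's is shorter and makes the shift--modulation duality explicit (which is the structural point reused in Lemma~\ref{appendix_mle_gram} to block-diagonalize $C$). Your closing remarks about the normalization convention and the link to phase correlation are consistent with how the paper uses the lemma.
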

\begin{proof}
Let $M_l$ denote the modulation operator $(M_lx)_k = x_k e(kl/L)$, defined such that $\mathcal{F} R_l x = M_l \mathcal{F} x$.
By Parseval's Theorem, \begin{align}
\left\{\langle y_i, R_{-l}y_j\rangle\right\}_l &= \left\{\langle \mathcal{F} y_i, M_{-l} \mathcal{F} y_j \rangle\right\}_l 
= \bigg\{\sum_{k \in \mathbb{Z}_L} e\left(\tfrac{-kl}{L}\right) \mathcal{F}(y_i,k) \mathcal{F}^*(y_j,k)\bigg\}_l \notag \\
&= \sqrt{L} \cdot \mathcal{F}\left\{\mathcal{F}(y_i,k) \mathcal{F}^*(y_j,k)\right\}_k. \label{sdp_phasecorrelation_trick}
\end{align}
\end{proof}

Cross- and phase correlation can be used directly for the problem of multireference alignment, say by aligning all of the observations against the first one. However, this is certainly not a robust method. 

Another relevant notion for characterizing alignments are the moment spectra. These are properties of a signal which are invariant under shifts. The $k$th power spectrum of a real signal $x \in \mathbb{R}^L$ is defined by $|\mathcal{F}(x,k)|^2 = \mathcal{F}(x,k)\mathcal{F}^*(x,k)$. This can be extended to the $d$-th order bispectrum (or moment spectra) \begin{align}
b(k_1,k_2, \ldots, k_d) = \mathcal{F}(x,k_1 + k_2 + \cdots + k_d) \mathcal{F}^*(x,k_1) \mathcal{F}^*(x,k_2) \cdots \mathcal{F}^*(x,k_d);
\end{align} its shift invariance can be seen since it is the Fourier transform of the $d$-th order autocorrelation function \begin{align}
a(k_1,k_2, \ldots, k_d) = \sum_{l=0}^{L-1} x_{l} \overline{x_{l-k_1} x_{l-k_2} \cdots x_{l-k_d}},
\end{align} by the Wiener-Khinchin theorem.
How much information do these invariants capture about the signal $x$? Sadler and Giannakis give iterative and least squares algorithms in \cite{sadler92_bispectrum} for reconstructing the Fourier phases from full knowledge of the bispectrum, and Kakarala in \cite{kakarala93_uniqueness} show the uniqueness of real function given all of its higher order moment spectra. These arguments generally tend to be of a more number-theoretic flavor and may be tedious in practice. 

A simple special case occurs when we take $k_1 = k_2 = \ldots = k_d = 1$ for $d = 1, \ldots, L$. This gives us the set of bispectral invariants $\mathcal{F}(x,d) \mathcal{F}^*(x,1)^d$. This quantity, as a shift invariant, can be consistently estimated from our observations $y_i$. With an estimate for the phase of the first Fourier coefficient $\mathcal{F}(x,1)$ (for example we can sample over a discretization of the unit circle), we can recover the remaining Fourier phases $\mathcal{F}(x,d)$ and thus the signal $x$.


\section{Maximum Likelihood}


\begin{lemma}
\label{appendix_mle_gram}
The data Gram matrix $C$ with entries $C_{ik;jl} = \left\langle R_{-k} y_i, R_{-l} y_j \right\rangle$ satisfies the following properties:
\begin{enumerate}
\item $C \succeq 0$ and has rank $L$, with non-zero eigenvalues $\lambda_k = L \sum_{i=1}^N \left|\mathcal{F}(y_i,k)\right|^2.$
\item There is a unitary matrix $\mathcal{P}$ for which $\mathcal{P}C\mathcal{P}^* = \diag(\mathcal{C}_0, \ldots, \mathcal{C}_{L-1})$ is block diagonal, where each $\mathcal{C}_k \in \mathbb{C}^{N \times N}$.
\end{enumerate}
\end{lemma}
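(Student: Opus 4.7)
The plan is to view $C$ as a Gram matrix and simultaneously exploit its block-circulant structure to diagonalize it in the Fourier domain.

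First I would establish positive semidefiniteness and the rank bound by an explicit factorization. Let $A \in \mathbb{R}^{NL \times L}$ be the matrix whose row indexed by $(i,k)$ equals $R_{-k} y_i$. Then by definition
\[
C_{ik;jl} = \langle R_{-k}y_i, R_{-l}y_j\rangle = (AA^T)_{ik;jl},
\]
so $C = AA^T \succeq 0$ and $\operatorname{rank}(C) \le \operatorname{rank}(A) \le L$. This gives at most $L$ nonzero eigenvalues for free, so it remains to identify them.

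Next, the key structural observation: cyclic-shift invariance of the inner product implies
\[
C_{ik;jl} = \langle y_i, R_{k-l}y_j\rangle,
\]
so the $L\times L$ block indexed by $(i,j)$ depends only on $k-l$, i.e.\ it is a circulant matrix. A block matrix whose blocks are circulant is simultaneously block-diagonalized (up to a permutation) by $I_N \otimes F_L$, where $F_L$ is the normalized DFT. Concretely, I would define the unitary
\[
\mathcal{P}_{(m,i),(i',k)} = \delta_{i,i'}\,\tfrac{1}{\sqrt{L}}\, e^{-2\pi\imath km/L},
\]
so that $\mathcal{P}$ first DFTs in the shift index and then reorders $(i,k)\mapsto (m,i)$.

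Then I would compute $\mathcal{P}C\mathcal{P}^*$ directly. Using Parseval together with the shift rule $\mathcal{F}(R_{-k}y_i,m) = e^{2\pi\imath mk/L}\mathcal{F}(y_i,m)$, one rewrites
\[
C_{ik;jl} = \sum_{m'} e^{-2\pi\imath m'(k-l)/L}\,\overline{\mathcal{F}(y_i,m')}\,\mathcal{F}(y_j,m'),
\]
and then the sums over $k$ and $l$ in the conjugation by $\mathcal{P}$ collapse by orthogonality of characters to force $m'=-m=-n\pmod L$. Using reality of $y_i$ so that $\mathcal{F}(y_i,-m)=\overline{\mathcal{F}(y_i,m)}$, this yields
\[
(\mathcal{P}C\mathcal{P}^*)_{(m,i),(n,j)} = L\,\delta_{m=n}\,\mathcal{F}(y_i,m)\overline{\mathcal{F}(y_j,m)}.
\]
Hence $\mathcal{P}C\mathcal{P}^* = \operatorname{diag}(\mathcal{C}_0,\ldots,\mathcal{C}_{L-1})$ with each $\mathcal{C}_m \in \mathbb{C}^{N\times N}$ the rank-one outer product $L\,v_m v_m^*$, where $v_m = (\mathcal{F}(y_i,m))_{i=1}^{N}$. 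Its unique nonzero eigenvalue is $\lambda_m = L\sum_{i}|\mathcal{F}(y_i,m)|^2$, giving both assertions at once.

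The only genuine subtlety is keeping the Fourier conventions, signs, and complex conjugates consistent when verifying the real-signal identity $\mathcal{F}(y_i,-m)=\overline{\mathcal{F}(y_i,m)}$ collapses the $-m$ back to $m$; otherwise everything follows from Parseval and the orthogonality relations for characters of $\mathbb{Z}_L$.
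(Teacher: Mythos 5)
Your proposal is correct and follows essentially the same route as the paper's proof: the Gram factorization $C = AA^T$ for positive semidefiniteness and the rank bound, the observation that each $L\times L$ block is circulant, block-diagonalization by a DFT tensored with the identity (composed with a reindexing permutation), and identification of each Fourier block as a rank-one outer product of the vectors $(\mathcal{F}(y_i,m))_i$ whose trace gives $\lambda_m = L\sum_i|\mathcal{F}(y_i,m)|^2$. The only cosmetic difference is that you carry out the Parseval/character-orthogonality computation inline where the paper invokes its convolution lemma.
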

\begin{proof}
$C$ has Cholesky decomposition $$C = (R\!y)(R\!y)^T \in \mathbb{R}^{LN \times LN} \quad \text{where } R\!y = \begin{pmatrix} \vdots \\ - \ R_{-k}y_i \ - \\ \vdots \end{pmatrix}_{i,k},$$ so $C \succeq 0$ has rank $L$. Since $\langle R_{-k}y_i, R_{-l}y_j \rangle = \langle R_{-k+m}y_i, R_{-l+m}y_j \rangle$, $C$ is composed of $N \times N$ circulant blocks of size $L \times L$. After permuting the rows and columns of $C$, one may write $C$ as a block circulant matrix $$P^{T}CP = \begin{pmatrix} C_0 & C_1 & \cdots & C_{L-1} \\ C_{L-1} & C_{0} & \cdots & C_{L-2} \\ \vdots & \vdots & \ddots & \vdots \\ C_1 & C_2 & \cdots & C_0 \end{pmatrix} \quad \text{where }(C_k)_{ij} = \langle y_i, R_{-k} y_j \rangle,$$ and $P$ is the appropriate permutation matrix. 
$P^TCP$ is block diagonalized by the block Discrete Fourier Transform matrix $DFT_L \otimes I_N$ \cite{ligong04_blockcirculant}, where $\otimes$ denotes the Kronecker product. $\mathcal{P} = (DFT_L \otimes I_N)P^T \in \mathbb{C}^{NL \times NL}$ is thus a unitary transformation such that \begin{align}\mathcal{P}C \mathcal{P}^* = \diag(\mathcal{C}_0, \ldots, \mathcal{C}_{L-1}); \quad \mathcal{C}_k \in \mathbb{C}^{N \times N} 
\end{align} is block diagonal. These block diagonals are componentwise Discrete Fourier Transforms of the ``vector'' $\left(C_0, C_1, \ldots, C_{L-1}\right)$: $$\mathcal{C}_k = \sum_{l = 0}^{L-1} e\left(\tfrac{kl}L\right) \, C_l = \sqrt{L} \cdot \left[\mathcal{F}^*(C_0, C_1, \ldots, C_{L-1})\right]_k = \left\{ L (Y_i)_k (Y_j^*)_k\right\}_{ij}$$ by Lemma \ref{sdp_convolutionlemma}. Note also that each $\mathcal{C}_k$ is Hermitian and rank $1$ ($C$ is of rank $L$, and each of the blocks $\mathcal{C}_k$ has positive rank). The unique nonzero eigenvalue $\lambda_k$ of $\mathcal{C}_k$ is given by
\begin{align}
\lambda_{k} &= \tr(\mathcal{C}_k) = L \sum_{i=1}^N \left|(Y_i)_k\right|^2.
\end{align} 
\end{proof}

\section{NP-hardness of Quasi MLE}
\label{section_appendix_nphard}

%

\begin{theorem}\label{mle_alignmentproblem}
Consider the problem $\mathsf{ALIGN}(y_1, \ldots, y_N)$: for vectors $y_1,\dots,y_N \in \mathbb{R}^L$, find the shifts $\boldsymbol{\ell} = (l_1,\dots,l_N)$ which maximize
\[
\mathcal{A}(l_1, \ldots, l_N) = \sum_{i,j \in [N]} \langle R_{-l_i} y_i, R_{-l_i}y_j \rangle.
\] Let $\mathcal{A}^* = \max_{\boldsymbol{\ell}} \mathcal{A}(\boldsymbol{\ell})$. It is \np{NP}-hard (under randomized reductions) to esimate $\mathcal{A}^*$ within $\frac{16}{17} + \varepsilon$ of its true value. It is \np{UG}-hard (under randomized reductions) to estimate 
$\mathcal{A}^*$ within any constant factor. 
\end{theorem}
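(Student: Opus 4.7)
The plan is to give an approximation-preserving randomized reduction from $\Gamma$-\np{MAX-2LIN}($q$) to $\mathsf{ALIGN}$. Given an instance with $n$ variables $x_1,\ldots,x_n \in \mathbb{Z}_q$ and $m$ constraints $x_i - x_j \equiv c_{ij} \pmod q$, I will produce $N = n$ vectors $y_1,\ldots,y_n \in \mathbb{R}^L$ with $L = qT$ for a polynomially large padding parameter $T$. Each $y_k$ is divided into $\binom{n}{2}$ ``slots'' of length $T$, one per pair of indices, and all slots start off zero. For every constraint on $(i,j)$ I draw an independent random vector $z_{ij} \in \{\pm 1\}^{T/q}$ (or i.i.d.\ Gaussian), embed $z_{ij}$ into the $(i,j)$-slot of $y_i$ at offset $0$, and embed $z_{ij}$ into the $(i,j)$-slot of $y_j$ at offset $c_{ij}\cdot (T/q)$. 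Thus the meaningful shifts $l_k$ of $y_k$ are the $q$ values that are multiples of $T/q$, and each such shift encodes a candidate assignment for $x_k$.

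Next I would compute the alignment objective. For shifts $l_k$ that are multiples of $T/q$, the pairwise inner product $\langle R_{-l_i} y_i,\, R_{-l_j} y_j\rangle$ receives a clean contribution $\|z_{ij}\|^2 = T/q \pm o(T/q)$ from the $(i,j)$-slot exactly when $l_i - l_j \equiv c_{ij} \pmod q$, that is, when the assignment satisfies the constraint. All other terms are inner products between distinct random $z_{i'j'}$ vectors, or between $z_{ij}$ and non-aligned shifts of itself, or between $z$ vectors placed in non-matching slots of $y_i,y_j$; each such cross-term is, with high probability, $O(\sqrt{(T/q)\log n})$ by standard Hoeffding/Gaussian tail bounds. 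A union bound over all $L$ shifts, all pairs $(i,j)$, and all pairs of inserted blocks controls every spurious term simultaneously, including the case of shifts that are not multiples of $T/q$ (where the $z$-blocks only partially overlap with themselves). Choosing $T = \mathrm{poly}(n,q)$ large enough makes all cross-contributions negligible compared to $T/q$, so with high probability
\begin{equation*}
\mathcal{A}^* \;=\; 2 \cdot \mathrm{OPT}\cdot \tfrac{T}{q} \;+\; o\!\left(m\cdot\tfrac{T}{q}\right),
\end{equation*}
where $\mathrm{OPT}$ is the maximum number of satisfied constraints of the $\Gamma$-\np{MAX-2LIN}($q$) instance (the factor $2$ comes from counting $(i,j)$ and $(j,i)$ in $\mathcal{A}$).

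Finally I would transfer known inapproximability: any algorithm approximating $\mathcal{A}^*$ within a factor $\alpha$ yields, by inverting the above identity, an algorithm approximating $\mathrm{OPT}$ within $\alpha \pm o(1)$. H{\aa}stad's \np{NP}-hardness of $\Gamma$-\np{MAX-2LIN}($q$) at ratio $16/17 + \varepsilon$ (for appropriate $q$) and the Khot-Kindler-Mossel-O'Donnell style \np{UG}-hardness at any constant factor then carry over verbatim, proving both claims. The main obstacle is the cross-term accounting in step two: each $y_k$ already carries $n-1$ independent $z$-blocks, and the objective sums $n^2$ pairwise inner products over all $L$ possible shift choices; ensuring that the ``good'' signal of size $\Theta(T/q)$ per satisfied constraint dominates the correlated noise from every possible misalignment requires careful choice of $T$ and a fine-grained union bound, and this is where I expect the bulk of the technical work in the appendix to live.
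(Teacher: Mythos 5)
Your overall strategy is the same as the paper's: a randomized, approximation-preserving reduction from $\Gamma$-\np{MAX-2LIN}$(q)$ in which each variable becomes a vector, each constraint contributes a random $\pm 1$ block planted at constraint-dependent offsets in the two incident vectors, Hoeffding bounds plus a union bound control all spurious correlations, and the known \np{NP}/\np{UG}-hardness of $\Gamma$-\np{MAX-2LIN}$(q)$ (and its special case \textsf{MAX-CUT}) is then transferred. However, there is a concrete flaw in your gadget layout. You make the pair-slot the outer (most significant) coordinate and the assignment offset the inner one: $z_{ij}$ sits at offset $0$ in $y_i$ and at offset $c_{ij}(T/q)$ in $y_j$ \emph{within a slot of length $T$}, and assignments correspond to shifts in $\{0, T/q, \ldots, (q-1)T/q\}$. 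But the shift operator is cyclic over the whole vector of length $L \gg T$, so the $\mathbb{Z}_q$ group structure is not respected: an assignment pair with $x_i - x_j \equiv c_{ij} \pmod q$ realized with integer difference $c_{ij} - q$ corresponds to a relative shift of $c_{ij}(T/q) - T$, which is \emph{not} congruent to $c_{ij}(T/q)$ modulo $L$, so the two copies of $z_{ij}$ do not superpose and the satisfied constraint earns nothing. This already breaks completeness: for a satisfiable instance whose constraints force a nonzero winding number around some cycle (e.g.\ $x_1-x_2\equiv 1$, $x_2-x_3\equiv 1$, $x_3-x_1\equiv q-2$), no choice of shifts recovers all the rewards, so $\mathcal{A}^* < 2\,\mathrm{OPT}\cdot(T/q)$ and your identity fails. (Relatedly, your stated $L=qT$ is inconsistent with having $\binom{n}{2}$ slots of length $T$.)

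The paper avoids this by inverting the radix order: $L = qM\templen$, the index of $y_i$ is read as a tuple in $(\mathbb{Z}_q, E(G), [\templen])$ with the assignment value as the \emph{most significant} digit, so a shift by one super-block advances the assignment by $1$ and $q$ such shifts wrap around the entire vector — the $\mathbb{Z}_q$ arithmetic of the constraints coincides exactly with the cyclic arithmetic of $\mathbb{Z}_L$. Your construction is repairable by the same reordering. Two further points you gloss over that the paper treats explicitly: (i) an \textsf{ALIGN} algorithm returns arbitrary shifts, not multiples of the block length, so the soundness direction needs an explicit decoding — the paper takes the subgraph of constraint-satisfied edges under the returned labelling and propagates an assignment along spanning trees of its components (Theorem \ref{mle_labelstocut}); and (ii) for the multiplicative approximation factor to transfer you must compare the additive noise not to $m\cdot T/q$ but to $\mathrm{OPT}\cdot T/q$, which is only guaranteed to be $\Omega(mT/q^2)$, forcing the explicit padding condition ($\templen > q^2M^4$ in the paper) rather than just ``$T$ polynomially large.''
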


We demonstrate this by a poly-time approximation preserving reduction from a special class of \np{MAX-2LIN}$(q)$ instances called $\Gamma$-\np{MAX-2LIN}$(q)$.
Consider a (connected) \np{MAX-2LIN}$(q)$ instance where each constraint has the form $x_i - x_j \equiv c_{ij} \pmod{q}$, of which at most $\rho^{*}$ are satisfiable. Representing each variable as a vertex of a graph and each constraint as an edge, the instance is associated with a connected graph $G = (V(G), E(G)),$ where $V(G) = [N]$ and $|E(G)| = M$. 

Let $\np{poly}(M)$ be the space of integer functions which are bounded by polynomial order, i.e. $f \in \np{poly}(M)$ iff there are constants $C,k$ such that $f(M) \le CM^k$ for all $M \ge 1$. We say that an event occurs w.h.p if it occurs with probability $1-\epsilon(q,G)$, where 
$\frac{1}{\epsilon(q,G)} \notin \np{poly}(q \cdot |E(G)|)$. Notice under this definition that if $\np{poly}(qM)$ events occur w.h.p, then by an union bound the event that all occur will also be w.h.p.

Construct a parameter $\templen = \templen(q,M) \in \np{poly}(qM)$. We take the vectors $y_1, \ldots, y_N$ to be of length $L = qM\templen$. The indices of the vector $y_i$ can be expressed in mixed radix notation by elements of the tuple $(\mathbb{Z}_q, E(G), [\templen])$. For example, we would associate the tuple index $(x, e_k, t)$ of $y_i$ by the index $x \cdot qM + e_k \cdot M + t$, where $e_k$ is the $k$th edge in some enumeration of $E(G)$. Note that a shift by $c \cdot qM$ takes this tuple index to $(x, e_k, t) + c \cdot qM \to (x + c, e_k, t)$. 

For each edge constraint $x_i - x_j \equiv c_{ij}$, let $z_{ij}$ be a vector uniformly at random chosen from $\{\pm 1\}^\templen$. Assign the entries $(0, \{i,j\}, \cdot)$ of $y_i$ to $z_{ij}$, and the entries $(c_{ij}, \{i,j\}, \cdot)$ of $y_j$ to $z_{ij}$. Assign all of the remaining entries of the $y_i$'s to i.i.d Rademacher random variables ($\{\pm 1\}$ with probability $1/2$). Intuitively, a relative shift of $c_{ij} \cdot qM$ between $y_i$ and $y_j$ will produce a large inner product due to the overlapping of the $z_{ij}$'s, while any other shift between them would produce low inner products.

\begin{lemma}\label{mle_hoeffding}
Suppose $\gamma \in \mathsf{poly}(qM)$. Consider two random vectors $y_1, y_2$ of length $\gamma$ whose entries are i.i.d sampled from the Rademacher distribution. W.h.p, for any $0 < \epsilon \ll 1$, the inner product of every possible shift $R_\ell y_1$ of $y_1$ with $y_2$ is bounded in absolute value by $\sqrt{\gamma} \cdot m^\epsilon$.
\end{lemma}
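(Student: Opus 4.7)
The plan is a standard Hoeffding concentration argument combined with a union bound over the $\gamma$ possible shifts, which is the natural route since the entries of $y_1, y_2$ are independent $\pm 1$ variables.

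First, I would fix a shift $\ell \in \mathbb{Z}_\gamma$ and expand
\[
\langle R_\ell y_1, y_2 \rangle \;=\; \sum_{k=0}^{\gamma-1} (y_1)_{k-\ell}\, (y_2)_k.
\]
The key observation is that because the map $k \mapsto k - \ell \pmod{\gamma}$ is a bijection, each of the $\gamma$ summands involves a distinct coordinate of $y_1$ and a distinct coordinate of $y_2$. Since $y_1, y_2$ are independent with i.i.d.\ Rademacher entries, the product $(y_1)_{k-\ell}(y_2)_k$ of two independent Rademachers is itself Rademacher, and the $\gamma$ products are mutually independent. Thus $\langle R_\ell y_1, y_2\rangle$ is a sum of $\gamma$ i.i.d.\ $\pm 1$ random variables.

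Second, I would apply Hoeffding's inequality to this sum to obtain, for every fixed $\ell$,
\[
\Pr\!\big[\,|\langle R_\ell y_1, y_2\rangle| \;\geq\; \sqrt{\gamma}\cdot M^{\epsilon}\,\big] \;\leq\; 2\exp\!\big(-\tfrac{1}{2} M^{2\epsilon}\big),
\]
interpreting the ``$m^\epsilon$'' in the statement as $M^\epsilon$ (consistent with the paper's w.h.p.\ notion, which is phrased in terms of $qM$). Taking a union bound over the $\gamma$ possible shifts $\ell \in \mathbb{Z}_\gamma$ gives
\[
\Pr\!\big[\,\exists\,\ell:\;|\langle R_\ell y_1, y_2\rangle| \;\geq\; \sqrt{\gamma}\cdot M^{\epsilon}\,\big] \;\leq\; 2\gamma\exp\!\big(-\tfrac{1}{2} M^{2\epsilon}\big).
\]

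Finally, I would verify that this failure probability matches the ``w.h.p.'' condition defined earlier in the appendix: its reciprocal is at least $\tfrac{1}{2}\gamma^{-1}\exp(\tfrac{1}{2}M^{2\epsilon})$, which grows faster than any polynomial in $qM$ because $\gamma \in \mathsf{poly}(qM)$ while $\exp(\tfrac{1}{2}M^{2\epsilon})$ is super-polynomial in $M$. Hence the reciprocal is not in $\mathsf{poly}(qM)$, so the event holds w.h.p.

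There is no real obstacle here; the only point requiring a moment of care is the independence of the $\gamma$ product variables $\{(y_1)_{k-\ell}(y_2)_k\}_k$ after the cyclic shift, which follows immediately from the bijectivity of the shift together with the independence of $y_1$ from $y_2$. Everything else is textbook concentration plus a union bound.
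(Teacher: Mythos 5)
Your proposal is correct and follows essentially the same route as the paper: Hoeffding's inequality applied to each shifted inner product (which is a sum of $\gamma$ independent $\pm 1$ variables), followed by a union bound over the $\gamma$ shifts and a check against the w.h.p.\ definition. The only small divergence is that the paper reads the threshold ``$m^\epsilon$'' as $(qM)^\epsilon$ rather than $M^\epsilon$, which makes the final step (the reciprocal of the failure probability lying outside $\mathsf{poly}(qM)$) immune to the possibility of $q$ being large relative to $M$; with your reading of $M^\epsilon$ that last inference would need $q$ to be at most polynomial in $M$.
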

\begin{proof}
By independence, each inner product is the sum of $\gamma$ independent Bernoulli random variables which take values $\pm 1$ with probability $1/2$. Hoeffding's inequality indicates
\[
 \prob(\langle R_\ell y_1, y_2 \rangle \geq \sqrt{\gamma} \cdot (qM)^\epsilon) \leq 2\exp\left\{ - (qM)^{\epsilon}/2 \right\}, 
\] so $\frac{1}{\prob(\langle R_\ell y_1, y_2 \rangle \geq \sqrt{\gamma} (qM)^\epsilon)} \notin \mathsf{poly}(qM)$. Union bounding over all $\gamma = \mathsf{poly}(qM)$ such inner products, w.h.p all of the inner products are simultaneously bounded by $\sqrt{\gamma} \cdot (qM)^\epsilon$.
\end{proof}

We say an edge $\{i,j\} \in E(G)$ is $c_{ij}$-satisfied under a labelling $\boldsymbol{\ell}$ if $l_i-l_j \equiv c_{ij}qM \pmod{L}$. From Lemma \ref{mle_hoeffding}, we observe that w.h.p, for any choices of shifts $l_i, l_j$, 
\[
|\langle R_{-l_i} y_i,\ R_{-l_j}y_j \rangle - \templen \cdot \delta(\{i,j\} \in E(G)\text{ is }c_{ij}\text{-satisfied})| < (qM)^\epsilon \sqrt{L}.
\] 
It follows that if the labelling $\boldsymbol{\ell}$ induces exactly $k$ $c_{ij}$-satisfied edges, 
w.h.p
\begin{align}
\label{mle_objectivealphashift} \big|\mathcal{A}(\boldsymbol{\ell}) - 2k\templen\big| < k (qM)^\epsilon \sqrt{L}  + \left(N^2 - k\right) (qM)^\epsilon \sqrt{L} \le q^\epsilon M^{2+\epsilon} \sqrt{L}. 
\end{align} 
\begin{theorem}
\label{mle_labelstocut}
From a given labelling $\boldsymbol{\ell}$, w.h.p one may in polynomial time construct $(x_1, \ldots, x_N) \in \mathbb{Z}_q^N$ satisfying at least $\big(\mathcal{A}(\ell) - q^\epsilon M^{2+\epsilon}\sqrt{L}\,\big)/(2\templen)$ edge-constraints $x_i - x_j \equiv c_{ij}$. Conversely, w.h.p there is a labelling $\boldsymbol{\ell}^{\np{max}}$ w.h.p satisfying $\mathcal{A}\left(\boldsymbol{\ell}^{\np{max}}\right) > 2\templen \cdot \rho^{*} M - q^\epsilon M^{2+\epsilon} \sqrt{L}$.
\end{theorem}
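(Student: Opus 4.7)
The plan is to leverage inequality (\ref{mle_objectivealphashift}), which, conditional on the w.h.p.\ event of Lemma \ref{mle_hoeffding}, tightly relates the objective $\mathcal{A}(\boldsymbol{\ell})$ to $2ks$, where $k$ is the number of $c_{ij}$-satisfied edges induced by $\boldsymbol{\ell}$. With this bridge in hand, both directions of the theorem reduce to elementary combinatorial translations between labellings $\boldsymbol{\ell} \in \mathbb{Z}_L^N$ and assignments $(x_i) \in \mathbb{Z}_q^N$, together with the w.h.p.\ concentration already established.

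For the forward direction, I would define a polynomial-time decoding $\psi: \mathbb{Z}_L \to \mathbb{Z}_q$ that reads off the $\mathbb{Z}_q$ coordinate in the tuple decomposition of a shift, and set $x_i = \psi(l_i)$. The construction of the $y_i$'s---with a copy of $z_{ij}$ placed at the $(0,\{i,j\},\cdot)$ slot in $y_i$ and at the $(c_{ij},\{i,j\},\cdot)$ slot in $y_j$---was engineered precisely so that an edge $\{i,j\}$ is $c_{ij}$-satisfied under $\boldsymbol{\ell}$ exactly when $\psi(l_i) - \psi(l_j) \equiv c_{ij} \pmod q$. Hence every $c_{ij}$-satisfied edge becomes a satisfied $\Gamma$-\np{MAX-2LIN}$(q)$ constraint under $(x_i)$. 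Letting $k$ denote this common count, inequality (\ref{mle_objectivealphashift}) gives $\mathcal{A}(\boldsymbol{\ell}) \le 2ks + q^{\epsilon} M^{2+\epsilon}\sqrt{L}$, so $k \ge (\mathcal{A}(\boldsymbol{\ell}) - q^{\epsilon} M^{2+\epsilon}\sqrt{L})/(2s)$, as required.

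For the converse, fix an optimal $(x_1^*, \ldots, x_N^*)$ satisfying $\rho^{*} M$ constraints, and lift it to a labelling via a canonical section $\sigma: \mathbb{Z}_q \to \mathbb{Z}_L$ of $\psi$ (for example, the unique shift whose $E(G)$- and $[s]$-coordinates are zero), setting $l_i^{\max} = \sigma(x_i^*)$. By the design of the $y_i$'s, every edge satisfied by $(x_i^*)$ pulls back to a $c_{ij}$-satisfied edge under $\boldsymbol{\ell}^{\max}$, so $\boldsymbol{\ell}^{\max}$ has at least $\rho^{*} M$ such edges. The lower-bound direction of (\ref{mle_objectivealphashift}) then yields $\mathcal{A}(\boldsymbol{\ell}^{\max}) \ge 2 \rho^{*} M s - q^{\epsilon} M^{2+\epsilon}\sqrt{L}$.

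The main subtlety I anticipate is ensuring compatibility of mod-$q$ arithmetic with mod-$L$ shift arithmetic for the lift: integer differences $x_i^* - x_j^* \in \{-(q-1),\ldots,q-1\}$ must map cleanly to the correct residue $c_{ij} \cdot (\text{stride}) \pmod L$ without spurious wrap-around. This is arranged by insisting the flexible parameter $s \in \np{poly}(qM)$ be chosen so that the subgroup of $\mathbb{Z}_L$ corresponding to $x$-shifts projects cleanly onto $\mathbb{Z}_q$ (e.g.\ $q \mid s$, which does not interfere with any other requirement on $s$). The ``w.h.p.''\ qualifier is inherited directly from (\ref{mle_objectivealphashift}): since Lemma \ref{mle_hoeffding} yields the bound uniformly over all pairs $(i,j)$ and all shift pairs $(l_i, l_j)$ by a union bound of $\np{poly}(qM)$ events, a single high-probability event simultaneously validates Part~1 (for the adversarially given $\boldsymbol{\ell}$) and Part~2 (for the specific lifted $\boldsymbol{\ell}^{\max}$).
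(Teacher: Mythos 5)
Your proposal is correct and follows the same overall route as the paper: inequality (\ref{mle_objectivealphashift}) is the bridge in both directions, and your converse is the paper's verbatim (the paper lifts an optimal assignment via $l_i^{\np{max}} = x_i \cdot qM$, which is exactly your canonical section $\sigma$). The one place you genuinely diverge is the forward-direction decoding: you read off the $\mathbb{Z}_q$-coordinate of each shift directly, $x_i = \psi(l_i)$, whereas the paper forms the subgraph $H$ of $c_{ij}$-satisfied edges and propagates $x_j \equiv x_i - c_{ij} \pmod q$ along a spanning tree of each component of $H$. These are equivalent --- the spanning-tree assignment agrees with $\psi$ up to an additive constant on each component, and difference constraints are insensitive to such offsets --- so both decoders convert every $c_{ij}$-satisfied edge into a satisfied \np{MAX-2LIN} constraint and the bound follows from (\ref{mle_objectivealphashift}) either way. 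One small overstatement: your ``exactly when'' is only an implication ($c_{ij}$-satisfaction forces the lower-order tuple coordinates of $l_i$ and $l_j$ to agree and the $x$-coordinates to differ by $c_{ij}$, but not conversely); fortunately you only invoke the direction that holds. Your closing worry about stride compatibility is well taken and is in fact a point the paper glosses over: as literally written, shifting by $qM$ in $\mathbb{Z}_{qM\templen}$ generates a subgroup of order $\templen$ rather than $q$, so one should either take the stride to be $M\templen$ or impose $q \mid \templen$ as you suggest; some such condition is needed for the lift in the converse direction under both write-ups, and it costs nothing given the freedom in choosing $\templen \in \np{poly}(qM)$.
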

\begin{proof}
Consider the subgraph $H$ of $G$ with vertex set $V(G)$ and edge set comprising all edges $c_{ij}$-satisfied under $\boldsymbol{\ell}$. 
For each connected component of $H$, arbitrarily choose a vertex $i$ of the component to have $x_i = 0$. Follow a spanning tree of each connected component and assign each neighbor $j$ by $x_j \equiv x_i - c_{ij} \pmod{q}$. 
The first implication of the lemma follows immediately by applying 
(\ref{mle_objectivealphashift}).
Conversely, construct the labelling $\boldsymbol{\ell}^{\np{max}}$ by setting $l^{\np{max}}_i = x_i \cdot qM$. This labelling induces at least $\rho^{*}M$ $c_{ij}$-satisfied edges, and (\ref{mle_objectivealphashift}) completes the lemma. 
\end{proof}

\begin{corollary}
Take $\templen > q^{2}M^{4}$.
If $\boldsymbol{\ell}^{\np{max}}$ satisfies $\mathcal{A}(\boldsymbol{\ell}^{\np{max}}) \ge \left(\alpha + \delta\right) \mathcal{A}^*$, then w.h.p, from $\boldsymbol{\ell}^{\np{max}}$ one may construct $(x_1, \ldots, x_N) \in \mathbb{Z}_q^N$ in poly-time satisfying $(\alpha + \delta')\rho^*$ fraction of $\mathbb{Z}_q$\np{-MAX-2LIN} constraints, for some $\delta' > 0$.
\end{corollary}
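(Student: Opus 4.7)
The plan is to treat the corollary as a gap-preserving rounding statement, combining both directions of Theorem \ref{mle_labelstocut} with a matching upper bound on $\mathcal{A}^*$. The forward direction of that theorem already turns the alignment value of $\boldsymbol{\ell}^{\np{max}}$ into a 2LIN satisfaction count, so the main task is to relate that count back to the 2LIN optimum $\rho^* M$ by bounding $\mathcal{A}^*$ from both sides.

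First I would establish the upper bound $\mathcal{A}^* \le 2\rho^* M \templen + q^\epsilon M^{2+\epsilon}\sqrt{L}$ w.h.p. By \eqref{mle_objectivealphashift} applied to the optimum labeling $\boldsymbol{\ell}^\star$, it suffices to show that for \emph{every} labeling the number $k$ of $c_{ij}$-satisfied edges is at most $\rho^* M$. This is where the hypothesis $\templen > q^{2}M^{4} \gg qM$ is essential: if $\{i_1,i_2,\ldots,i_r,i_1\}$ is any cycle in the $c_{ij}$-satisfied subgraph $H$, then $(c_{i_1 i_2}+\cdots+c_{i_r i_1})\cdot qM \equiv 0 \pmod{qM\templen}$, so $\sum_e c_e \equiv 0 \pmod \templen$; but $|\sum_e c_e| < qM < \templen$ forces actual equality, i.e.\ $H$ is globally consistent. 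Hence the spanning-tree construction in the proof of Theorem \ref{mle_labelstocut} produces a 2LIN assignment satisfying every edge of $H$, which yields $|E(H)| = k \le \rho^* M$.

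Next I would apply the forward half of Theorem \ref{mle_labelstocut} to $\boldsymbol{\ell}^{\np{max}}$, extracting $(x_1,\ldots,x_N)\in\mathbb{Z}_q^N$ satisfying at least $\bigl(\mathcal{A}(\boldsymbol{\ell}^{\np{max}}) - q^\epsilon M^{2+\epsilon}\sqrt{L}\bigr)/(2\templen)$ constraints. Substituting the hypothesis $\mathcal{A}(\boldsymbol{\ell}^{\np{max}}) \ge (\alpha+\delta)\mathcal{A}^*$ and the lower bound $\mathcal{A}^* \ge 2\rho^*M\templen - q^\epsilon M^{2+\epsilon}\sqrt{L}$ from the converse half of Theorem \ref{mle_labelstocut}, the number of satisfied constraints is at least
\[
(\alpha+\delta)\rho^* M \;-\; \tfrac{1+\alpha+\delta}{2}\cdot\frac{q^\epsilon M^{2+\epsilon}\sqrt{L}}{\templen}.
\]
Since $L = qM\templen$ and $\templen > q^2 M^4$, we have $\sqrt{L}/\templen = \sqrt{qM/\templen} \le q^{-1/2} M^{-3/2}$, so the error term is $O(q^{\epsilon-1/2} M^{1/2+\epsilon})$; dividing by $\rho^* M$ (with $\rho^* \ge 1/q$ from the random assignment bound) gives a relative error of $O(q^{\epsilon+1/2} M^{\epsilon-1/2})$, which can be made smaller than $\delta/2$ by taking $\epsilon$ small and $M$ large (with $q,\delta$ fixed). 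Setting $\delta' = \delta/2$ then yields the desired $(\alpha + \delta')\rho^*$ fraction.

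The only genuinely delicate point is the upper bound on $\mathcal{A}^*$: one must argue that the $c_{ij}$-satisfied subgraph is automatically 2LIN-consistent for \emph{arbitrary} labelings (not just those induced by honest 2LIN assignments), and this is exactly what the assumption $\templen > q^2M^4$ was engineered to ensure via the cycle-sum modular argument. Everything else is arithmetic with the same error term $q^\epsilon M^{2+\epsilon}\sqrt{L}/\templen$ that already appears in Theorem \ref{mle_labelstocut}, which the hypothesis on $\templen$ absorbs into an arbitrarily small additive loss.
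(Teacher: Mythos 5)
Your argument is correct and, in its operative part, identical to the paper's: applying the forward half of Theorem \ref{mle_labelstocut} to $\boldsymbol{\ell}^{\np{max}}$, substituting the hypothesis $\mathcal{A}(\boldsymbol{\ell}^{\np{max}}) \ge (\alpha+\delta)\mathcal{A}^*$ together with the converse half's lower bound $\mathcal{A}^* \ge 2\templen\rho^* M - q^{\epsilon}M^{2+\epsilon}\sqrt{L}$, and letting $\templen > q^2M^4$ absorb the error term is exactly the chain of inequalities in the paper's proof. The only divergence is your first paragraph: the upper bound $\mathcal{A}^* \le 2\rho^* M\templen + q^{\epsilon}M^{2+\epsilon}\sqrt{L}$ obtained from the cycle-consistency argument is never used in your own final calculation and is not needed for the corollary (that observation is really part of justifying the forward direction of Theorem \ref{mle_labelstocut} itself), so it can be deleted without loss.
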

\begin{proof}
Let $\rho$ be the fraction of variable assignments satisfied by the construction of Theorem \ref{mle_labelstocut} for the labelling $\boldsymbol{\ell}^{\np{max}}$. W.h.p, \[
\rho > \frac{\mathcal{A}(\boldsymbol{\ell}^{\np{max}}) - q^\epsilon M^{2 + \varepsilon} \sqrt{L}}{2\templen M} > 
\frac{(\alpha + \delta)\mathcal{A}^* 
- q^{1/2+\epsilon} M^{5/2 + \epsilon} \templen^{1/2}}{2\templen M} > (\alpha + \delta) \rho^{*} - \frac{q^{1/2+\epsilon} M^{3/2 + \epsilon}}{\templen^{1/2}}.
\] 
\end{proof}

 
Assuming the Unique Games conjecture, it is NP-hard to approximate $\Gamma$-\np{MAX-2LIN}$(q)$ to any constant factor \cite{KKMO}.
Hence it is UG-hard (under randomized reductions) to approximate {\sf ALIGN} within
any constant factor.
\textsf{MAX-CUT} is a special case of $\Gamma$-\np{MAX-2LIN}$(q)$.
Since it is \textsf{NP}-hard to approximate \textsf{MAX-CUT} within 
$\frac{16}{17} + \varepsilon$, it is  \textsf{NP}-hard (under randomized reductions)
to approximate \textsf{ALIGN} within $\frac{16}{17} + \varepsilon$.


\section{$*$-matrix algebras} 
\label{appendix_staralgebras}
In this appendix section, we briefly sketch the ideas behind symmetry reduction in semidefinite programs. Most of the details are deferred to other sources. For example, \cite{dobre11_sdpsymmetry} is a good reference on this topic. While the application of symmetry to the SDPs we are interested in is relatively straightforward and can be explained without reference to $*$-matrix algebras, the notation and ideas behind the theory of $*$-matrix algebras make it especially clear to express.

\begin{definition}
A complex (or real) matrix $*$-algebra $\mathcal{A}$ is a set of $n \times n$ matrices in $\mathbb{C}^{n \times n}$ (or $\mathbb{R}^{n \times n}$) such that for any $X, Y \in \mathcal{A}$,
\[
\alpha X + \beta Y,\ X^*,\ XY \in \mathcal{A} \quad \text{for all } \alpha, \beta \in \mathbb{C}\ (\text{or }\mathbb{R}).
\] 
A coherent configuration is a set of matrices $\{A_1, \ldots, A_d\} \subset \{0,1\}^{n \times n}$ satisfying:
\begin{itemize}
\item Some subset of the $A_i$'s sum to $I_n$.
\item $\sum_{i=1}^d A_i = \bold{1}_{n \times n}.$
\item $A_i^T \in \mathcal{A}$.
\item $A_i A_j$ is a linear combination of $A_1, \ldots, A_d$.
\end{itemize} The span of the $A_i$'s is a $*$-matrix algebra $\mathcal{A}$. If the $A_i$'s commute and contain the identity matrix $I_n$, then this configuration is called an association scheme, and $\mathcal{A}$ is also known as a Bose-Mesner algebra. Treating $\mathcal{A}$ as a matrix vector space, a coherent configuration forms a basis for $\mathcal{A}$. 
\end{definition}

For example, $\mathbb{C}^{n \times n}$ is a matrix $*$-algebra. 
Another commonly considered matrix $*$-algebra is the space of $G$-invariant matrices. Let $S$ be a finite set, and let $G$ be a finite group acting on $S$. The space of $G$-invariant matrices are the matrices $A \in \mathbb{C}^{|S| \times |S|}$ for which $P_g^TAP_g = A$, where $P_g$ is the permutation matrix associated with $g \in G$. The space of $G$-invariant matrices forms a matrix $*$-algebra generated by the coherent configuration $\{P_g\}_{g \in G}$. In particular, the basis is closed under multiplication, since $P_{g_1}P_{g_2} = P_{g_2 \circ g_1}$.

\begin{definition}
A $*$-homomorphism $\phi$ is a linear mapping of $*$-matrix algebras $\phi:\mathcal{A} \to \mathcal{B}$ which additionally satisfies $\phi(A^*) = \phi(A)^*$, $\phi(AB) = \phi(A)\phi(B)$, and $\phi(I_{\mathcal{A}}) = I_{\mathcal{B}}$ if the identity matrix $I_{\mathcal{A}} \in \mathcal{A}$.
\end{definition} 
\begin{lemma} 
\label{background_starhomomorphism}
If $\phi$ is an injective $*$-homomorphism $\phi: \mathcal{A} \to \mathcal{B}$, then $A \in \mathcal{A}$ and $\phi(A)$ share the same set of eigenvalues (ignoring multiplicity), and $A \succeq 0 \Longleftrightarrow \phi(A) \succeq 0$.
\end{lemma}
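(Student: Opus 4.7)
The plan is to reduce everything to a single observation: an injective $*$-homomorphism preserves the minimal polynomial. First I would verify the polynomial intertwining identity $\phi(p(A)) = p(\phi(A))$ for every polynomial $p \in \mathbb{C}[t]$. This follows by induction on degree from the three defining properties of a $*$-homomorphism: linearity handles coefficients and sums, multiplicativity handles $A^k$ via $\phi(A^k) = \phi(A)^k$, and the unital condition $\phi(I_\mathcal{A}) = I_\mathcal{B}$ handles the constant term. (The identity $I_{\mathcal{A}}$ lies in $\mathcal{A}$ in the settings we care about because the coherent configuration contains a subset summing to $I$.)

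Next I would use injectivity to conclude that $A$ and $\phi(A)$ have the same minimal polynomial. Indeed, $p(A) = 0$ implies $\phi(p(A)) = p(\phi(A)) = 0$; conversely $p(\phi(A)) = 0$ gives $\phi(p(A)) = 0$, and injectivity forces $p(A) = 0$. So the annihilating polynomials of $A$ and $\phi(A)$ coincide, and hence so do their minimal polynomials. Since the set of eigenvalues of any matrix (ignoring multiplicity) is exactly the set of roots of its minimal polynomial, this proves the first claim.

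For the equivalence $A \succeq 0 \iff \phi(A) \succeq 0$, recall that a complex matrix is positive semidefinite precisely when it is Hermitian and all its eigenvalues are non-negative. The property of being Hermitian is preserved in both directions: $\phi(A)^* = \phi(A^*)$, so $A = A^*$ implies $\phi(A) = \phi(A)^*$, and conversely injectivity applied to $\phi(A - A^*) = 0$ forces $A = A^*$. Combined with the previous step, the sets of eigenvalues agree, so non-negativity of eigenvalues transfers between $A$ and $\phi(A)$ in both directions. This gives the equivalence.

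The only real subtlety is justifying why set-equality of eigenvalues (rather than multiset-equality) suffices for the PSD conclusion; this is immediate because the criterion ``all eigenvalues lie in $[0,\infty)$'' only depends on the underlying set. No multiplicity information is needed, which is fortunate since an injective $*$-homomorphism may certainly change multiplicities (consider the inclusion of block-diagonal matrices into a larger ambient space). I do not anticipate any hard step; the argument is essentially bookkeeping once the polynomial intertwining identity is in hand.
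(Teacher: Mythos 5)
Your proof is correct. Note, however, that the paper itself gives no proof of Lemma~\ref{background_starhomomorphism}: it is stated bare in Appendix~\ref{appendix_staralgebras} and treated as a standard fact from the symmetry-reduction literature (the neighboring results defer to \cite{dobre11_sdpsymmetry}), so there is no in-paper argument to compare against. Your minimal-polynomial route is the standard way to establish it and every step checks out: the intertwining identity $\phi(p(A))=p(\phi(A))$, the use of injectivity in both directions to identify the ideals of annihilating polynomials, the identification of the eigenvalue \emph{set} with the roots of the minimal polynomial, and the transfer of Hermitian-ness via $\phi(A^*)=\phi(A)^*$ together with injectivity applied to $\phi(A-A^*)=0$. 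Your parenthetical about unitality is the one genuinely load-bearing caveat: the paper's definition only requires $\phi(I_{\mathcal{A}})=I_{\mathcal{B}}$ \emph{when} $I_{\mathcal{A}}\in\mathcal{A}$, and for a non-unital subalgebra the eigenvalue claim can actually fail at the eigenvalue $0$ (e.g.\ $\mathcal{A}=\{\diag(a,0)\}$ mapping to $\mathbb{C}^{1\times 1}$), so your explicit restriction to the unital case is not mere bookkeeping but a necessary hypothesis that the paper's statement leaves implicit. Your closing remark that only set-equality (not multiset-equality) of eigenvalues is needed for the PSD equivalence, and that multiplicities genuinely can change under such maps, is also accurate and worth keeping.
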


To leverage symmetry in semidefinite programs, it will be useful to take the images of the objective semidefinite matrix under dimension-reducing $*$-homomorphisms. Two such $*$-homomorphisms which may be systematically constructed are given by Artin-Wedderburn's Theorem and regular $*$-representations \cite{murota_blockdiagonal07}. Consider a primal semidefinite problem of the form \begin{align}
\text{minimize}& \quad \tr(C_0V) \label{background_sdp_symmetryreduction} \\
\text{subject to}& \quad V \succeq 0 \quad \text{and}\quad \tr(C_iV) = b_i, \notag \end{align} where $V$ is the semidefinite matrix to optimize over, and $C_i \in \mathbb{R}^{n \times n}$ are data matrix constraints and $b = (b_1, \ldots, b_m)$ are given, for $i \in \{1,2, \ldots, m\}$. 
Suppose the data matrices $C_i$ are contained in a low-dimensional complex matrix $*$-algebra $\mathcal{A}_{SDP}$. 
\begin{theorem}
Suppose the data matrices $\{C_i\}_{i=0}^m \subset \mathbb{R}^{n \times n}$ are real symmetric. Let $\mathcal{A}_{SDP}$ be a complex $*$-algebra with a real orthogonal basis containing all of the $C_i$'s. Then 
\begin{enumerate}
\item there is an optimal solution $V$ to the primal SDP problem contained in $\mathcal{A}_{SDP}$ 
\item $\text{Re}(V) \in \mathcal{A}_{SDP}$ is also an optimal solution to the primal SDP problem.
\end{enumerate}
\end{theorem}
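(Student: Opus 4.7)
The plan is to exhibit an averaging operator $\Pi : \mathbb{C}^{n \times n} \to \mathcal{A}_{SDP}$ with three properties: it acts as the identity on $\mathcal{A}_{SDP}$, it preserves the Hilbert--Schmidt inner product against every element of $\mathcal{A}_{SDP}$, and it preserves positive semidefiniteness. Once such a $\Pi$ is in hand, I would take any optimal solution $V^\ast$ of the SDP and apply $\Pi$: the image $\Pi(V^\ast)$ lies in $\mathcal{A}_{SDP}$, is still PSD by property (iii), still feasible since $\tr(C_i \Pi(V^\ast)) = \tr(C_i V^\ast) = b_i$ for $i \ge 1$ by property (ii) (using $C_i \in \mathcal{A}_{SDP}$), and attains the same objective value $\tr(C_0 \Pi(V^\ast)) = \tr(C_0 V^\ast)$. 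This establishes part (1).

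To construct $\Pi$, I would pass through the commutant $\mathcal{A}'_{SDP} = \{M \in \mathbb{C}^{n \times n} : MA = AM \text{ for all } A \in \mathcal{A}_{SDP}\}$, which is itself a $*$-algebra, and define the Haar average
\[
\Pi(V) \;=\; \int_{U(\mathcal{A}'_{SDP})} U V U^{\ast}\, d\mu(U)
\]
over the compact group of unitary elements of $\mathcal{A}'_{SDP}$. By the double commutant theorem the result commutes with all of $\mathcal{A}'_{SDP}$ and hence lies in $(\mathcal{A}'_{SDP})' = \mathcal{A}_{SDP}$. For any $A \in \mathcal{A}_{SDP}$, cyclicity of trace together with $UA = AU$ gives $\tr(A \cdot UVU^{\ast}) = \tr(AV)$, so integrating over $U$ yields (ii); property (iii) is immediate because each conjugate $UVU^{\ast}$ is PSD whenever $V$ is, and $\Pi(V)$ is a convex combination of such matrices; property (i) follows since $UAU^{\ast} = A$ for $A \in \mathcal{A}_{SDP}$ and $U \in U(\mathcal{A}'_{SDP})$. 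A sanity check: this $\Pi$ must coincide with the Frobenius-orthogonal projection onto $\mathcal{A}_{SDP}$, because both maps are idempotent with image $\mathcal{A}_{SDP}$ and kernel $\mathcal{A}_{SDP}^{\perp}$ --- the latter because (ii) forces $\Pi(V) = 0$ iff $V \perp \mathcal{A}_{SDP}$.

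For part (2), let $V \in \mathcal{A}_{SDP}$ be an optimal solution produced by part (1); it is Hermitian since PSD. Expanding $V = \sum_j z_j B_j$ in the given real orthogonal basis, one reads off $\operatorname{Re}(V) = \sum_j \operatorname{Re}(z_j)\, B_j \in \mathcal{A}_{SDP}$. For any $x \in \mathbb{R}^n$, $x^\top \operatorname{Re}(V) x = \operatorname{Re}(x^{\ast} V x) = x^{\ast} V x \ge 0$, so $\operatorname{Re}(V) \succeq 0$. Since $V$ is Hermitian and each $C_i$ real symmetric, a short computation shows $\tr(C_i V) \in \mathbb{R}$, whence $\tr(C_i \operatorname{Re}(V)) = \operatorname{Re}(\tr(C_i V)) = \tr(C_i V)$, so both the linear constraints ($i \ge 1$) and the objective ($i = 0$) are preserved.

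The main obstacle is property (iii): ordinary linear projection onto a subspace generally does \emph{not} preserve the PSD cone, so one must genuinely exploit the $*$-algebra structure to obtain a completely positive averaging. Haar-averaging over the commutant's unitary group supplies this for free, and the real-orthogonal-basis hypothesis is precisely what makes the ``take real part'' step of part (2) land inside $\mathcal{A}_{SDP}$ with no further work.
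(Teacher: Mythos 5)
Your proof is correct, but note that the paper does not actually prove this theorem: its ``proof'' consists of a citation to Corollary 2.5.2 of \cite{dobre11_sdpsymmetry}. What you supply is, in substance, the standard argument from that symmetry-reduction literature --- that the Frobenius-orthogonal projection onto a matrix $*$-algebra is a trace-compatible, positivity-preserving map (a conditional expectation), realized concretely as a Haar average over the unitary group of the commutant. Your three properties of $\Pi$ are verified correctly, part (1) follows cleanly, and the part (2) computation ($\operatorname{Re}(V)\in\mathcal{A}_{SDP}$ via the real basis, $x^{\top}\operatorname{Re}(V)x = x^{*}Vx \ge 0$ using that $\operatorname{Im}(V)$ is antisymmetric, and $\tr(C_iV)\in\mathbb{R}$) is right. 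Two small points deserve a sentence each. First, the step $\Pi(V)\in(\mathcal{A}'_{SDP})'=\mathcal{A}_{SDP}$ needs the double commutant theorem, which requires $I_n\in\mathcal{A}_{SDP}$; for a non-unital $*$-subalgebra the bicommutant can be strictly larger (e.g.\ $\mathcal{A}=\mathbb{C}E_{11}\subset\mathbb{C}^{2\times 2}$ has bicommutant equal to the full diagonal algebra, and there the Haar average does not land in $\mathcal{A}$). The theorem as stated omits unitality, so either add it as a hypothesis or observe that it holds in every instance the paper uses (a coherent configuration contains a subset of basis elements summing to $I_n$, and $I_n$ is $G$-invariant). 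Second, you implicitly use that the unitaries of $\mathcal{A}'_{SDP}$ span it, so that commuting with all of them puts $\Pi(V)$ in the bicommutant; this is standard (write a Hermitian contraction $H$ as $\tfrac12(U+U^{*})$ with $U=H+i\sqrt{I-H^{2}}$) but should be stated. With those two remarks added, your argument is a complete, self-contained replacement for the paper's external citation.
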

\begin{proof}
Refer to (\cite{dobre11_sdpsymmetry}, Corollary 2.5.2).
\end{proof}

A special case of this is the result applies for the $*$-algebra of $G$-invariant matrices.
\begin{lemma} \label{appendix_staralgebra_groupinvariant}
In the semidefinite program (\ref{background_sdp_symmetryreduction}), suppose the constraint matrices $C_0, C_1, \ldots, C_m$ are $G$-invariant. Then there is a solution $V$ to the SDP which is also $G$-invariant.
\end{lemma}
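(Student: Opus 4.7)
The plan is to use a standard averaging (or Reynolds operator) argument: start with any optimal SDP solution $V^\ast$, average its conjugates under the group action, and verify that the averaged matrix is still feasible, still optimal, and $G$-invariant by construction. This is the usual reason one may restrict attention to invariant solutions in symmetric SDPs.

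Concretely, I would proceed in four short steps. First, let $V^\ast \succeq 0$ be any optimal primal solution; it exists because we are assuming the SDP has an optimum. Second, for each $g \in G$ define $V^g := P_g^\top V^\ast P_g$. Since $P_g$ is a permutation matrix (hence orthogonal), $V^g \succeq 0$; and since $P_g^\top C_i P_g = C_i$ for each $i$ by the $G$-invariance hypothesis,
\[
\tr(C_i V^g) = \tr(C_i P_g^\top V^\ast P_g) = \tr(P_g C_i P_g^\top V^\ast) = \tr(C_i V^\ast) = b_i \quad (i \ge 1),
\]
and similarly $\tr(C_0 V^g) = \tr(C_0 V^\ast)$. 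Hence every $V^g$ is feasible and optimal. Third, define the Reynolds average
\[
\overline{V} := \frac{1}{|G|} \sum_{g \in G} P_g^\top V^\ast P_g.
\]
By convexity of the PSD cone and linearity of the trace, $\overline{V} \succeq 0$, $\tr(C_i \overline{V}) = b_i$, and $\tr(C_0 \overline{V}) = \tr(C_0 V^\ast)$, so $\overline{V}$ is also a feasible optimal solution.

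Finally, I would check $G$-invariance of $\overline{V}$. For any $h \in G$,
\[
P_h^\top \overline{V} P_h = \frac{1}{|G|}\sum_{g \in G} P_h^\top P_g^\top V^\ast P_g P_h = \frac{1}{|G|}\sum_{g \in G} (P_g P_h)^\top V^\ast (P_g P_h) = \frac{1}{|G|}\sum_{g' \in G} P_{g'}^\top V^\ast P_{g'} = \overline{V},
\]
using $P_g P_h = P_{gh}$ (or $P_{hg}$, depending on the convention) and the reindexing $g' = gh$ which is a bijection on $G$. Thus $\overline{V}$ lies in the $*$-algebra of $G$-invariant matrices and attains the SDP optimum, proving the claim.

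There is essentially no obstacle here; the only thing to be slightly careful about is the convention by which $G$ acts through $P_g$, so that $P_g P_h$ equals $P_{gh}$ (or $P_{hg}$) consistently, which is what makes the reindexing in the last display legitimate. Everything else is a one-line check.
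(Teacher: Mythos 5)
Your proof is correct and follows exactly the paper's own argument: averaging an optimal solution over the group via the Reynolds operator $\frac{1}{|G|}\sum_{g\in G} P_g^\top V^\ast P_g$ and checking feasibility, optimality, and invariance. The paper states this in one line; you have simply filled in the routine verifications.
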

\begin{proof}
If $V'$ is a solution of (\ref{background_sdp_symmetryreduction}), then the matrix $V \in \mathbb{C}^{n \times n}$ given by \begin{align*}
V = \mathcal{R}_G(V) = \frac{1}{|G|} \sum_{g \in G} P_g^T V' P_g
\end{align*} is $G$-invariant, feasible, and has the same objective value as $V'$. The averaging map $\mathcal{R}_G$ is known as the \textit{Reynolds operator}.
\end{proof}

\section{Stability}
\label{appendix_stability}



\begin{lemma}
\label{stability_tailbounds_eta}
$\mathbb{E}[\eta_{ij}|\xi_i] \le \mu_\eta \|\xi_i\|$ and $\mathbb{E}[\eta_{ij}^2|\xi_i] \le \sigma_\eta \|\xi_i\|^2$, where $\mu_\eta = 2\sigma(\log L + 1)$ and $\sigma_\eta^2 = 4\sigma^2(\log
^2 L + 1)$. 
\end{lemma}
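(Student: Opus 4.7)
The plan is to condition on $\xi_i$ and exploit that $\xi_j\sim\mathcal{N}(0,\sigma^2 I_L)$ is independent (implicitly taking $j\ne 0$ so $\xi_j$ is genuinely random; the degenerate case $j=0$, with $\xi_0=2x$ deterministic, is separate and handled via Cauchy--Schwarz shift-by-shift). Conditionally on $\xi_i$, each inner product $Z_l:=\langle R_l\xi_i,\xi_j\rangle$ is then a mean-zero Gaussian with variance $\sigma^2\|R_l\xi_i\|^2=\sigma^2\|\xi_i\|^2$, using that the cyclic shift $R_l$ is an orthogonal transformation of $\mathbb{R}^L$. Writing $M:=\max_{l\in\mathbb{Z}_L}|Z_l|$, the problem reduces to bounding the first two moments of $\eta_{ij}=2M$. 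The joint dependence of the $Z_l$'s (they share the single vector $\xi_j$) plays no role, since I will use only the marginal tail $\mathbb{P}(|Z_l|>t\mid\xi_i)\le 2\exp(-t^2/(2\sigma^2\|\xi_i\|^2))$ combined with a union bound over the $L$ shifts.

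For the first moment, I would use the layer-cake identity $\mathbb{E}[M\mid\xi_i]=\int_0^\infty\mathbb{P}(M>t\mid\xi_i)\,dt$ with the bound $\mathbb{P}(M>t\mid\xi_i)\le\min(1,\ 2L\exp(-t^2/(2\sigma^2\|\xi_i\|^2)))$, splitting at the threshold $t^\star:=\sigma\|\xi_i\|\sqrt{2\log(2L)}$ where the two estimates coincide. The portion up to $t^\star$ contributes at most $t^\star$; the tail beyond $t^\star$ reduces, via the Mills-ratio bound $\int_a^\infty e^{-u^2/2}\,du\le e^{-a^2/2}/a$, to a further $\sigma\|\xi_i\|/\sqrt{2\log(2L)}$. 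Multiplying by $2$ yields $\mathbb{E}[\eta_{ij}\mid\xi_i]=O(\sigma\|\xi_i\|\sqrt{\log L})$, which is dominated by the claimed $2\sigma(\log L+1)\|\xi_i\|$ once $L$ exceeds a small constant, with the few small-$L$ values verified by hand (e.g., for $L=1$ the bound $\mathbb{E}[|Z_0|\mid\xi_i]=\sigma\|\xi_i\|\sqrt{2/\pi}$ is immediate).

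For the second moment, the analogous split of $\mathbb{E}[M^2\mid\xi_i]=\int_0^\infty 2t\,\mathbb{P}(M>t\mid\xi_i)\,dt$ at the same $t^\star$ gives a main term $(t^\star)^2=2\sigma^2\|\xi_i\|^2\log(2L)$ and an additive $O(\sigma^2\|\xi_i\|^2)$ tail coming directly from the Gaussian-tail computation, so $\mathbb{E}[\eta_{ij}^2\mid\xi_i]=4\,\mathbb{E}[M^2\mid\xi_i]=O(\sigma^2\|\xi_i\|^2\log L)$, which is in turn absorbed into the claimed $4\sigma^2(\log^2 L+1)\|\xi_i\|^2$ (with considerable slack for large $L$, and with the small-$L$ values absorbed by the $+1$).

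The main (mild) obstacle is just bookkeeping the right split point: a naive Cauchy--Schwarz or $\ell^1$-bound on $\sum_l|Z_l|$ would cost a factor of $L$ or $\sqrt L$, whereas splitting at $t^\star$ keeps everything at the sharp $\log L$ scale. No concentration tool beyond the marginal Gaussian tail and a union bound is needed, since the $Z_l$'s are marginally exact Gaussians with no sub-Gaussianity slack to exploit and the union bound is already sharp at the $\log L$-scale required by $\mu_\eta$ and $\sigma_\eta^2$.
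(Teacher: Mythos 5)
Your proposal is correct and follows essentially the same route as the paper: condition on $\xi_i$ so that each $\langle R_l\xi_i,\xi_j\rangle$ is a centered Gaussian of variance $\sigma^2\|\xi_i\|^2$, union-bound the tail of the maximum over the $L$ shifts, and integrate the tail with a split threshold (the paper likewise treats $\eta_{j0}$ by conditioning on the deterministic vector $\xi_0=2x$ and letting $\xi_j$ carry the randomness, so no separate Cauchy--Schwarz argument is needed there). The only real difference is the split point: the paper takes $T=\log L$ precisely so that $T+Le^{-T}=\log L+1$ and $T^2+Le^{-T}=\log^2L+1$ fall out with no case analysis, whereas your sharper $t^\star=\sigma\|\xi_i\|\sqrt{2\log(2L)}$ gives the asymptotically better $O(\sqrt{\log L})$ and $O(\log L)$ rates but then requires the somewhat delicate hand-verification for small $L$ that you defer (e.g.\ for the second moment the claimed bound $\log^2L+1$ is quite tight for $L$ between $2$ and $8$, so that step should not be waved away).
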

\begin{proof}
Take $\zeta \sim \mathcal{N}(0, I_L)$, and choose a unit vector $z \in \mathbb{R}^L$. 
Define $\eta(z) = \max_{l} \eta_l(z)$, where the distribution of $\eta_l(z) = |\langle R_l z, \zeta\rangle|$ is independent of the choice of $z$.
Union bounding,
\[
 \prob\left( \eta(z) \geq t \right) \leq \sum_{l \in \mathbb{Z}_L} \prob\left(\eta_l(z) \geq \tfrac t2 \right) \le \sum_{l} \prob\left(|\langle e_l, \zeta \rangle| \ge \tfrac t2 \right) = \sum_{l} \prob\left( |\zeta_l| \geq \tfrac t2 \right) \le \min\bigg\{1, \frac{2L}t \cdot \frac{e^{-t^2/2}}{\sqrt{2\pi}}\bigg\}.
\]
where $e_l \in \mathbb{R}^L$ is a standard unit basis vector. Thus, for $T > 1$,
\begin{align*}
\mu_\eta &= \mathbb{E} \, \eta(z) =\int_0^{\infty} \prob\left( \eta(z) \geq t \right) dt \leq T + \frac{2L}{\sqrt{2\pi}} \int_T^{\infty} te^{-t^2/2} \, dt \le T + L e^{-T} 
\end{align*}
 and
\begin{align*}
\sigma_\eta^2 &= \mathbb{E} \, \eta(z)^2 = \int_0^{\infty} 2t \prob\left( \eta(z) \geq t \right) dt \leq T^2 + \frac{L}{\sqrt{2\pi}} \int_T^{\infty} te^{-t^2/2} \, dt \le T^2 + L e^{-T}.
\end{align*}
The lemma follows by taking $T = \log L$, since $\eta_{ij} = 2\sigma\|\xi_i\| \cdot \eta\big(\xi_i / \|\xi_i\|\big)$ conditional on $\xi_i$.
\end{proof}

\begin{lemma}
\label{stability_independenttailbounds}
Let $t > 0$. With probability at least $1 - 2 N e^{-t^2 N}$, 
\begin{align}
&\sum_{i \neq j} (\eta_{ij} + \eta_{j0}) D_{ij} \le (\mu_\eta + \sigma_\eta t)\bigg(2\|x\| + \frac 1N\sum_i \|\xi_i\|\bigg) N^2.
\label{stability_independenttailbounds_ineq}
\end{align}
holds for all $D_{ij} \in [0,1]$.
\end{lemma}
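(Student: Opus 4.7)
The first step is a trivial reduction: since $0 \le D_{ij} \le 1$, the left-hand side is at most
$$\sum_{i\ne j}\eta_{ij} + \sum_{i\ne j}\eta_{j0} \;=\; \sum_{i=1}^N \sum_{j:\,j\ne i}\eta_{ij} \;+\; (N-1)\sum_{j=1}^N \eta_{j0},$$
so it suffices to prove the stated bound on this (uniform-in-$D$) random quantity.

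Next, I would control the inner sum $Z_i := \sum_{j\ne i}\eta_{ij}$ by conditioning on $\xi_i$. Given $\xi_i$, the variables $\{\eta_{ij}\}_{j\ne i}$ are independent, since each is a deterministic function of a single, independent $\xi_j$. Lemma \ref{stability_tailbounds_eta} already yields $\mathbb{E}[\eta_{ij}\mid\xi_i]\le \mu_\eta\|\xi_i\|$. Moreover, the map $\xi_j \mapsto 2\max_l|\langle R_l\xi_j,\xi_i\rangle|$ is $2\|\xi_i\|$-Lipschitz, so by Gaussian concentration of Lipschitz functions applied to $\xi_j\sim\mathcal{N}(0,\sigma^2 I_L)$, $\eta_{ij}$ is sub-Gaussian (conditional on $\xi_i$) with proxy $O(\sigma^2\|\xi_i\|^2)$. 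A Hoeffding-type bound for sums of independent sub-Gaussians then gives
$$\prob\bigl(Z_i \ge (\mu_\eta + \sigma_\eta t)(N-1)\|\xi_i\|\,\bigm|\,\xi_i\bigr) \;\le\; e^{-t^2 N}.$$
Treating $\xi_0 = 2x$ as deterministic, the same argument applied to the fully independent family $\{\eta_{j0}\}_{j=1}^N$ (whose means satisfy $\mathbb{E}\eta_{j0}\le 2\mu_\eta\|x\|$ by the $\|\xi_0\|=2\|x\|$ version of Lemma \ref{stability_tailbounds_eta}) gives
$$\prob\Bigl(\textstyle\sum_j \eta_{j0} \ge (\mu_\eta + \sigma_\eta t)\cdot 2\|x\|\, N\Bigr) \;\le\; e^{-t^2 N}.$$

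Finally, I would union-bound over the resulting $N+1 \le 2N$ tail events. On the good event, $\sum_i Z_i \le (\mu_\eta+\sigma_\eta t)(N-1)\sum_i\|\xi_i\| \le (\mu_\eta+\sigma_\eta t)N^2\,\bar\xi$ with $\bar\xi := \tfrac{1}{N}\sum_i\|\xi_i\|$, and $(N-1)\sum_j\eta_{j0}\le (\mu_\eta+\sigma_\eta t)\cdot 2\|x\|\, N^2$; adding these recovers the claimed inequality.

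The main technical point will be securing the exponent $e^{-t^2 N}$. Chebyshev applied to the second-moment bound from Lemma \ref{stability_tailbounds_eta} only yields polynomial concentration, so it is essential to exploit the Lipschitz dependence of $\eta_{ij}$ on the Gaussian $\xi_j$: after conditioning on $\xi_i$, this gives a sub-Gaussian proxy of size $O(\sigma\|\xi_i\|)$, which is smaller than $\sigma_\eta\|\xi_i\|$ by a $\log L$ factor and thus leaves ample room to cover the targeted tail. The conditioning is what decouples the otherwise dependent $\eta_{ij}$'s — the same trick underlying the statement that this inequality is \emph{independent of the structure of the SDP}.
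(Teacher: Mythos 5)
Your proof is correct and shares the paper's skeleton --- bound $D_{ij}\le 1$, condition on $\xi_i$ so that the $\eta_{ij}$'s become independent across $j$, concentrate each row sum, and union bound over the $\mathcal{O}(N)$ rows --- but it substitutes a genuinely different concentration tool. The paper invokes Maurer's deviation bound \cite{maurer2003_bound} for sums of independent non-negative random variables, which uses only the conditional moment bounds $\mathbb{E}[\eta_{ij}\mid\xi_i]\le\mu_\eta\|\xi_i\|$ and $\mathbb{E}[\eta_{ij}^2\mid\xi_i]\le\sigma_\eta^2\|\xi_i\|^2$ of Lemma \ref{stability_tailbounds_eta} and no further Gaussian structure; you instead observe that $\xi_j\mapsto 2\max_l|\langle R_l\xi_j,\xi_i\rangle|$ is $2\|\xi_i\|$-Lipschitz and apply Gaussian concentration plus sub-Gaussian Hoeffding. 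Your route is arguably the more watertight one: Maurer's inequality as stated controls the \emph{lower} tail $\prob\left(\sum_j\mathbb{E}X_j-\sum_jX_j\ge t\right)$ of a non-negative sum, whereas the lemma needs the \emph{upper} tail, and second-moment control alone cannot give sub-Gaussian upper tails for general non-negative variables --- the Lipschitz structure you exploit is exactly what justifies the exponent. The price is some bookkeeping on constants: your exponent comes out as roughly $-t^2(N-1)\sigma_\eta^2/(8\sigma^2)=-t^2(N-1)(\log^2L+1)/2$, which dominates the target $-t^2N$ only once $L$ and $N$ are not tiny; you flag this correctly, and the $\log L$ gap between the sub-Gaussian proxy $2\sigma\|\xi_i\|$ and $\sigma_\eta\|\xi_i\|$ indeed absorbs it. Two minor organizational differences, neither consequential: the paper folds $\eta_{j0}$ into each row sum $\sum_j(\eta_{ij}+\eta_{j0})$ (still a function of $\xi_j$ alone given $\xi_i$, hence independent over $j$), while you treat $\sum_j\eta_{j0}$ as a separate $(N+1)$-st event; and you should note explicitly that your conditional tail bound for $Z_i$ holds uniformly in $\xi_i$, so it integrates to the same unconditional bound before the union bound.
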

\begin{proof} 
Define the random variables $Z_{ij} = (\eta_{ij} + \eta_{j0}) D_{ij} \le \eta_{ij} + \eta_{j0}$. 
For fixed $i$, the $\eta_{ij}$'s are independent positive random variables.
The one-sided tail bound for independent positive random variables by Maurer \cite{maurer2003_bound} gives \[
\prob\bigg(\sum_j (\eta_{ij} + \eta_{j0}) - \sum_j \mathbb{E} \, (\eta_{ij} + \eta_{j0}) \ge t_i \sqrt{\sum_j \mathbb{E} \, (\eta_{ij}^2 + \eta_{j0}^2)} \, \bigg) \le 2e^{-t^2_i}.
\] Choose $t_i = t\sqrt{N}$.
By union bound, with probability at least $1 - 2N e^{-t^2N}$,
\begin{align*}
\sum_{i,j} Z_{ij} &\le \sum_{i} \mu_\eta (2\|x\| + \|\xi_i\|) N + \sum_i t \sqrt{N^2 \sigma_\eta^2 (4\|x\|^2+ \|\xi_i\|^2)} \\
&\le (\mu_\eta + \sigma_\eta t) \bigg(2\|x\| + \frac 1N \sum_i \|\xi_i\|\bigg) N^2.
\end{align*}
\end{proof}

\begin{theorem}
\label{appendix_stability_theorem}
With probability $1 - e^{-N + o(N)}$, the solution to the SDP satisfies $$
\sum_{i,j} D_{ij} \le \frac{(\|x\| + \sigma^2 \sqrt{L}) \cdot 12\log eL }{\Delta} \cdot N^2.
$$
\end{theorem}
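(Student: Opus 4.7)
The plan is to combine the deterministic structural inequality from Lemma \ref{appendix_stability_ineq1} with the concentration tail bound from Lemma \ref{stability_independenttailbounds}, and then absorb the remaining random quantity $\tfrac{1}{N}\sum_i \|\xi_i\|$ using a standard chi-squared concentration. All three ingredients are already essentially stated; the work is to chain them.

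First, since $V$ is optimal for the SDP and the integral matrix $V^{int}$ is feasible, we have $\tr(CV) \ge \tr(CV^{int})$, so Lemma \ref{appendix_stability_ineq1} gives
\begin{align*}
\Delta \sum_{i\neq j} D_{ij} \;\le\; \sum_{i\neq j}(\eta_{ij}+\eta_{j0})\, D_{ij}.
\end{align*}
Note also that $D_{ii}=0$ on the SDP feasible region (since the constraints force $\sum_k V_{ik;ik}=1$), so $\sum_{i,j}D_{ij}=\sum_{i\neq j}D_{ij}$.

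Next, I would apply Lemma \ref{stability_independenttailbounds} with $t=1$, whose failure probability $2Ne^{-N} = e^{-N+o(N)}$ matches the target probability in the theorem. On this event, the right-hand side above is bounded by $(\mu_\eta + \sigma_\eta)\bigl(2\|x\| + \tfrac{1}{N}\sum_i \|\xi_i\|\bigr) N^2$. Plugging in the closed-form bounds from Lemma \ref{stability_tailbounds_eta} gives $\mu_\eta + \sigma_\eta \le 2\sigma(\log L + 1) + 2\sigma\sqrt{\log^2 L + 1} \le 4\sigma\log(eL)$.

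Finally, to clean up $\tfrac{1}{N}\sum_i \|\xi_i\|$, I would invoke a standard chi-squared concentration: each $\|\xi_i\|^2/\sigma^2 \sim \chi^2_L$, so $\|\xi_i\| \le \sigma\sqrt{L}(1+o(1))$ with probability $1-e^{-\Omega(N)}$, and a union bound over $i \in [N]$ keeps the failure probability at $e^{-N+o(N)}$. Combining these bounds and dividing by $\Delta$ produces an inequality of the form $\sum_{i,j} D_{ij} \le C\,\sigma\log(eL)\bigl(\|x\| + \sigma\sqrt{L}\bigr) N^2 / \Delta$ with an absolute constant $C$, which after factoring the $\sigma$ into the parenthesis matches the claimed bound $(\|x\|+\sigma^2\sqrt{L})\cdot 12\log(eL)/\Delta \cdot N^2$.

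The main obstacle is bookkeeping: one has to track carefully how the $\sigma$ emerging from $\mu_\eta,\sigma_\eta$ distributes across the $2\|x\|$ and $\sigma\sqrt{L}$ terms to land exactly on the stated form $\|x\|+\sigma^2\sqrt L$ and the constant $12$. A secondary subtlety is that when $L \gg N$, the concentration of $\tfrac{1}{N}\sum_i \|\xi_i\|$ around $\sigma\sqrt{L}$ is best obtained via a Lipschitz/subgaussian concentration of $\|\xi_i\|$ rather than naive union bounding of $\chi^2$ tails, but this change does not affect the final rate.
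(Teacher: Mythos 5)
Your outline follows the paper's own argument step for step: Lemma \ref{appendix_stability_ineq1} to get $\Delta\sum_{i\neq j} D_{ij} \le \sum_{i\neq j}(\eta_{ij}+\eta_{j0})D_{ij}$, Lemma \ref{stability_independenttailbounds} with $t=1$ (failure probability $2Ne^{-N}=e^{-N+o(N)}$), the bound $\mu_\eta+\sigma_\eta \le 4\sigma\log(eL)$ from Lemma \ref{stability_tailbounds_eta}, and finally a chi-squared bound on $\tfrac1N\sum_i\|\xi_i\|$. The one step that would fail as written is the last one. You bound each $\|\xi_i\|$ separately by $\sigma\sqrt{L}(1+o(1))$ ``with probability $1-e^{-\Omega(N)}$'' and union bound over $i$. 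But $\|\xi_i\|^2/\sigma^2\sim\chi^2_L$ concentrates at a scale governed by $L$, not $N$: to push an individual tail down to $e^{-N}$ you must allow a deviation of order $\sigma\sqrt{N}$, i.e.\ only $\|\xi_i\|\le\sigma(\sqrt{L}+c\sqrt{N})$ is available. In the regime $N\gg L$ --- which is the regime of interest here, many observations of a fixed-length signal --- this is $\Theta(\sigma\sqrt{N})$, not $\sigma\sqrt{L}(1+o(1))$, and your final bound would degrade to $\sigma^2\sqrt{N}$ in place of $\sigma^2\sqrt{L}$. Your closing caveat flags a subtlety in the regime $L\gg N$, but that is precisely the regime where your per-$i$ argument is fine; the problem is the opposite one.

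The fix is what the paper does: bound the aggregate rather than the maximum. Since $\tfrac{1}{\sigma^2}\sum_i\|\xi_i\|^2\sim\chi^2_{NL}$, the Laurent--Massart bound with $t=NL$ gives $\sum_i\|\xi_i\|^2\le 5\sigma^2NL$ with probability $1-e^{-NL}$, and Cauchy--Schwarz then yields $\sum_i\|\xi_i\|\le\sqrt{N\sum_i\|\xi_i\|^2}\le\sigma N\sqrt{5L}$. (Equivalently, Lipschitz concentration of the average $\tfrac1N\sum_i\|\xi_i\|$, whose Lipschitz constant is $1/\sqrt{N}$, works in all regimes.) Everything else in your outline is sound. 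On the bookkeeping you worry about: the discrepancy between $\sigma\|x\|$ and $\|x\|$ is present in the paper itself --- its proof ends with $12\log(eL)(\|x\|+\sigma\sqrt{L})N^2$, which does not literally match the stated $(\|x\|+\sigma^2\sqrt{L})$ either --- so you should not expect that factor of $\sigma$ to distribute cleanly onto both terms.
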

\begin{proof}
Let $V$ be a solution matrix to the SDP, so that $\tr(CV) \ge \tr(CV^{int})$. It follows from Lemma \ref{appendix_stability_ineq1} that $$ \Delta \sum_{ij} D_{ij} \le \sum_{ij} (\eta_{ij} + \eta_{j0}) D_{ij}.$$
The $\chi^2$-tail bound of Laurent-Massart \cite{Laurent_Massart_TailChiSquare} states
\begin{align}
\prob \bigg(\frac{1}{\sigma^2}\sum_i \|\xi_i\|^2 \le NL + 2\sqrt{NL t} + 2t\bigg) \ge 1-\exp\{-t\}. \label{massartlaurent}
\end{align} When $t = NL$ this gives \[
\sum_i \|\xi_i\| \le \sqrt{N \sum_i \|\xi_i\|^2} \le \sigma N\sqrt{5L}.
\] with probability at least $1 - e^{-NL}$. Union bounding with the tail bound of Lemma \ref{stability_independenttailbounds}, and applying Lemma \ref{stability_tailbounds_eta}, $$\Delta \sum_{ij} D_{ij} \le (\mu_\eta + \sigma_\eta)\bigg(2\|x\| + \frac 1N\sum_i \|\xi_i\|\bigg) N^2 \le 12 \log eL(\|x\| + \sigma \sqrt{L}) N^2$$ with probability at least $1 - e^{-N + o(N)}$.
\end{proof}

\begin{lemma}
\label{appendix_stability_johnsonlindenstrauss}
Let $0 < \delta, \varepsilon, \varepsilon' \ll 1$ be small constants, and define $\alpha_{JL}(x) = (1+\delta) x + \varepsilon$. There is a set of unit vectors $\mathfrak{N}$ of size at most $$\exp\left(\mathcal{O}\left(\delta^{-2} \log(1/\varepsilon) \log(1/\varepsilon') \right)\right)$$ such that for any set of unit vectors $\{v_i\}$, there is a map $\varphi : \{v_i\} \to \mathfrak{N}$ satisfying the inequality $$\alpha_{JL}^{-1}\bigg(\|v_i-v_j\|^2\bigg) \le \|\varphi(v_i) - \varphi(v_j)\|^2 \le \alpha_{JL}\bigg(\|v_i-v_j\|^2\bigg)$$ for at least $1 - \varepsilon'$ fraction of the pairs $(i,j) \in [N] \times [N]$.
\end{lemma}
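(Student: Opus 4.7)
The plan is to combine a Johnson--Lindenstrauss style random projection with an $\varepsilon$-net on a low-dimensional sphere, choosing parameters so that the net $\mathfrak{N}$ is fixed once and for all (independent of the input $\{v_i\}$) and the map $\varphi$ is obtained via projection followed by snapping to the nearest element of $\mathfrak{N}$.

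First, fix a target dimension $k = C\delta^{-2}\log(1/\varepsilon')$ for a suitable absolute constant $C$, and let $\mathfrak{N}$ be a minimal $(\varepsilon/8)$-net of the unit sphere $S^{k-1}\subset\mathbb{R}^k$. A standard volume argument gives $|\mathfrak{N}|\le (24/\varepsilon)^k$, which matches the stated bound $\exp(\mathcal{O}(\delta^{-2}\log(1/\varepsilon')\log(1/\varepsilon)))$. Crucially, $\mathfrak{N}$ depends only on $\delta,\varepsilon,\varepsilon'$, not on the particular unit vectors.

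Given unit vectors $\{v_i\}_{i=1}^N\subset\mathbb{R}^{NL}$, draw a random Gaussian projection $\Pi:\mathbb{R}^{NL}\to\mathbb{R}^k$ with i.i.d.\ $\mathcal{N}(0,1/k)$ entries. The standard JL concentration bound states that for any fixed nonzero $w$,
\begin{equation*}
\Pr\!\left[\,\bigl|\|\Pi w\|^2-\|w\|^2\bigr|\,>\,\tfrac{\delta}{6}\|w\|^2\right]\le\varepsilon'/2,
\end{equation*}
provided $C$ is chosen large enough. Apply this to the $N$ unit vectors $v_i$ and to the $\binom{N}{2}$ difference vectors $v_i-v_j$. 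A union bound plus Markov's inequality yield a deterministic realization of $\Pi$ under which (i) $\|\Pi v_i\|\in[1-\delta/6,1+\delta/6]$ for \emph{every} $i$, and (ii) for at least a $1-\varepsilon'$ fraction of pairs $(i,j)$, $\|\Pi v_i-\Pi v_j\|^2$ approximates $\|v_i-v_j\|^2$ up to a multiplicative factor $1\pm\delta/6$. (To arrange (i) and (ii) simultaneously, choose the union-bound parameter for (i) to be $\varepsilon'/(2N)$ at the cost of enlarging $C$ by a further $\log N$ factor only if needed; actually this is avoided by noting the $N$ norm events contribute expected bad fraction $\varepsilon'/2$, which is absorbed.)

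Now define $\varphi(v_i)$ to be the element of $\mathfrak{N}$ closest to the renormalization $\Pi v_i/\|\Pi v_i\|$. Then $\|\varphi(v_i)-\Pi v_i\|\le\varepsilon/8+|\|\Pi v_i\|-1|\le\varepsilon/8+\delta/6$. Writing $\varphi(v_i)-\varphi(v_j)=(\Pi v_i-\Pi v_j)+e_{ij}$ with $\|e_{ij}\|\le\varepsilon/4+\delta/3$, and applying the elementary inequality $\|a+b\|^2\le(1+\delta/3)\|a\|^2+(1+3/\delta)\|b\|^2$, one obtains on the good pairs
\begin{equation*}
\|\varphi(v_i)-\varphi(v_j)\|^2\le(1+\delta/3)^2\|v_i-v_j\|^2+\mathcal{O}\!\left(\tfrac{(\varepsilon+\delta)^2}{\delta}\right)\le(1+\delta)\|v_i-v_j\|^2+\varepsilon
\end{equation*}
once $\varepsilon$ is rescaled by an absolute constant (so that $\varepsilon^2/\delta\le\varepsilon$, which holds since $\varepsilon\ll\delta\ll 1$; otherwise decrease the net resolution from $\varepsilon/8$ to $\sqrt{\delta\varepsilon}/C'$, which only changes the size bound by a constant in the exponent). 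The symmetric lower inequality follows by the analogous computation with $a\mapsto\Pi v_i-\Pi v_j$ and $b\mapsto -e_{ij}$.

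The main obstacle is the bookkeeping of multiplicative and additive errors in the final step: the net contributes an additive error that must be squared against a possibly small $\|\Pi v_i-\Pi v_j\|$, producing cross-terms of order $\varepsilon\cdot\|v_i-v_j\|$ that do not obviously fit into the form $(1+\delta)x+\varepsilon$. Choosing the net resolution slightly finer than $\varepsilon$ (on the order of $\sqrt{\delta\varepsilon}$) resolves this cleanly while keeping $\log|\mathfrak{N}|$ within the claimed asymptotics, since $\log(1/\sqrt{\delta\varepsilon})=\mathcal{O}(\log(1/\varepsilon))$ for the regime $\delta$ bounded away from super-polynomially small values.
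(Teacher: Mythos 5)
Your construction is essentially the paper's: project via Johnson--Lindenstrauss to dimension $\Theta(\delta^{-2}\log(1/\varepsilon'))$, snap to a fixed net of the low-dimensional sphere, and let the multiplicative slack $(1+\delta)$ absorb the JL distortion while the additive slack $\varepsilon$ absorbs the discretization; your treatment of the ``$1-\varepsilon'$ fraction of pairs'' via expected bad fraction is also the same (terse) argument the paper uses. The one step that does not close as written is the final error absorption. The error vector $e_{ij}$ carries a contribution of size $\delta/3$ coming from the renormalization $\Pi v_i \mapsto \Pi v_i/\|\Pi v_i\|$, since $\bigl|1-\|\Pi v_i\|\bigr|$ is only controlled at scale $\delta$, not $\varepsilon$; hence $(1+3/\delta)\|e_{ij}\|^2 \ge (3/\delta)(\delta/3)^2 = \delta/3$, and the claimed inequality $\mathcal{O}\bigl((\varepsilon+\delta)^2/\delta\bigr)\le\varepsilon$ is false whenever $\delta\gg\varepsilon$ (a regime the lemma permits, as $\delta$ and $\varepsilon$ are independent constants). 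Your proposed remedy --- refining the net to resolution $\sqrt{\delta\varepsilon}$ --- shrinks only the net's contribution to $e_{ij}$, not this radial one.

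The clean fix, which keeps everything else intact, is to drop the renormalization and take $\mathfrak{N}$ to be an $\mathcal{O}(\varepsilon)$-net of the ball of radius $2$ (or of the annulus of widths $\pm\delta$ around the unit sphere) rather than of the unit sphere itself; then snapping costs only $\mathcal{O}(\varepsilon)$ per vector and the cardinality bound changes only by constants in the exponent. To be fair, the paper's own proof is equally terse on exactly this point --- it too snaps merely near-unit projected vectors onto a unit-sphere net without accounting for the radial discrepancy --- so you have reproduced the argument at the paper's level of rigor and, to your credit, explicitly identified the additive-versus-multiplicative bookkeeping as the delicate step; it just needs the ball-net (or an $\mathcal{O}(\varepsilon)$-accurate norm preservation) to be completed.
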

\begin{proof}
This lemma appears frequently in SDP literature, and in particular is used to analyze adversarial semi-random \textsf{Unique-Games} instances in \cite{Kolla_Makarychev2_UG}. Notice that the size of the set $\mathfrak{N}$ is independent of the number of vectors in the set $\{v_i\}$. 
 
Construct a $\varepsilon/32$-net $\mathfrak{N}$ of the unit hypersphere in a $\mathcal{O}(\delta^2 \log(1 / \varepsilon'))$-dimensional space $\mathfrak{L}$. By the (strong version of the) Johnson-Lindenstrauss lemma, there is a randomized mapping $\varphi': \{v_i\} \to \mathfrak{L}$ satisfying $$(1-\delta/2)\|v_i-v_j\|^2 \le \|\varphi'(v_i) - \varphi'(v_j)\|^2 \le (1+\delta/2)\|v_i-v_j\|^2$$ with probability at least $1 - \varepsilon'$. Define $\varphi(v_i)$ to be the closest point to $\varphi'(v_i)$ in $\mathfrak{N}$, and observe that
$$(1-\delta/2)x - \varepsilon \ge \frac{x - \varepsilon}{1 + \delta} = \alpha_{JL}^{-1}(x),\quad (1 + \delta/2)x + \varepsilon \le \alpha_{JL}(x)$$ for all $x > 0$, so $\varphi$ satisfies the conditions of the lemma.
\end{proof}

\end{document}